\gdef\@fpheader{ }
\gdef\@journal{ }
\newif\ifnotoc\notocfalse
\newif\ifemailadd\emailaddfalse
\newif\iftoccontinuous\toccontinuousfalse
\def\@subheader{\@empty}
\def\@keywords{\@empty}
\def\@abstract{\@empty}
\def\@xtum{\@empty}
\def\@dedicated{\@empty}
\def\@arxivnumber{\@empty}
\def\@collaboration{\@empty}
\def\@collaborationImg{\@empty}
\def\@proceeding{\@empty}
\def\@preprint{\@empty}
\newcommand{\subheader}[1]{\gdef\@subheader{#1}}
\newcommand{\keywords}[1]{\if!\@keywords!\gdef\@keywords{#1}\else%
\PackageWarningNoLine{\jname}{Keywords already defined.\MessageBreak Ignoring last definition.}\fi}
\renewcommand{\abstract}[1]{\gdef\@abstract{#1}}
\newcommand{\dedicated}[1]{\gdef\@dedicated{#1}}
\newcommand{\arxivnumber}[1]{\gdef\@arxivnumber{#1}}
\newcommand{\proceeding}[1]{\gdef\@proceeding{#1}}
\newcommand{\xtumfont}[1]{\textsc{#1}}
\newcommand{\correctionref}[3]{\gdef\@xtum{\xtumfont{#1} \href{#2}{#3}}}
\newcommand\jname{JHEP}
\newcommand\acknowledgments{\section*{Acknowledgments}}
\newcommand\preprint[1]{\gdef\@preprint{\hfill #1}}
\newtheorem{theorem}{Theorem}
\newenvironment{proof}[1][Proof]{\noindent\textbf{#1.} }{\ \rule{0.5em}{0.5em}}
\newcommand\note[2][]{%
\if!#1!%
\stepcounter{footnote}\footnotetext{#2}%
\else%
{\renewcommand\thefootnote{#1}%
\footnotetext{#2}}%
\fi}
\newtoks\auth@toks
\renewcommand{\author}[2][]{%
  \if!#1!%
    \auth@toks=\expandafter{\the\auth@toks#2\ }%
  \else
    \auth@toks=\expandafter{\the\auth@toks#2$^{#1}$\ }%
  \fi
}
\newtoks\affil@toks\newif\ifaffil\affilfalse
\newcommand{\affiliation}[2][]{%
\affiltrue
  \if!#1!%
    \affil@toks=\expandafter{\the\affil@toks{\item[]#2}}%
  \else
    \affil@toks=\expandafter{\the\affil@toks{\item[$^{#1}$]#2}}%
  \fi
}
\newtoks\email@toks\newcounter{email@counter}%
\newcommand{\emailAdd}[1]{%
\emailaddtrue%
\ifnum\theemail@counter>0\email@toks=\expandafter{\the\email@toks, \@email{#1}}%
\else\email@toks=\expandafter{\the\email@toks\@email{#1}}%
\fi\stepcounter{email@counter}}
\newcommand{\@email}[1]{\href{mailto:#1}{\tt #1}}
\newcommand*\collaboration[1]{\gdef\@collaboration{#1}}
\newcommand*\collaborationImg[2][]{\gdef\@collaborationImg{#2}}
\newcommand\afterLogoSpace{\smallskip}
\newcommand\afterSubheaderSpace{\vskip3pt plus 2pt minus 1pt}
\newcommand\afterProceedingsSpace{\vskip21pt plus0.4fil minus15pt}
\newcommand\afterTitleSpace{\vskip23pt plus0.06fil minus13pt}
\newcommand\afterRuleSpace{\vskip23pt plus0.06fil minus13pt}
\newcommand\afterCollaborationSpace{\vskip3pt plus 2pt minus 1pt}
\newcommand\afterCollaborationImgSpace{\vskip3pt plus 2pt minus 1pt}
\newcommand\afterAuthorSpace{\vskip5pt plus4pt minus4pt}
\newcommand\afterAffiliationSpace{\vskip3pt plus3pt}
\newcommand\afterEmailSpace{\vskip16pt plus9pt minus10pt\filbreak}
\newcommand\afterXtumSpace{\par\bigskip}
\newcommand\afterAbstractSpace{\vskip16pt plus9pt minus13pt}
\newcommand\afterKeywordsSpace{\vskip16pt plus9pt minus13pt}
\newcommand\afterArxivSpace{\vskip3pt plus0.01fil minus10pt}
\newcommand\afterDedicatedSpace{\vskip0pt plus0.01fil}
\newcommand\afterTocSpace{\bigskip\medskip}
\newcommand\afterTocRuleSpace{\bigskip\bigskip}
\newlength{\affiliationsSep}\setlength{\affiliationsSep}{-3pt}
\newcommand\beforetochook{\pagestyle{myplain}\pagenumbering{roman}}
\DeclareFixedFont\trfont{OT1}{phv}{b}{sc}{11}
\renewcommand\maketitle{
%% First page
\pagestyle{empty}
\thispagestyle{titlepage}
\setcounter{page}{0}
\noindent{\small\scshape\@fpheader}\@preprint\par

\afterLogoSpace
% Subheader
\if!\@subheader!\else\noindent{\trfont{\@subheader}}\fi
\afterSubheaderSpace
% Proceedings
\if!\@proceeding!\else\noindent{\sc\@proceeding}\fi
\afterProceedingsSpace
% Title
{\LARGE\flushleft\sffamily\bfseries\@title\par}
\afterTitleSpace
% Rule
\hrule height 1.5\p@%
\afterRuleSpace
% Collaboration
\if!\@collaboration!\else
{\Large\bfseries\sffamily\raggedright\@collaboration}\par
\afterCollaborationSpace
\fi
\if!\@collaborationImg!\else
{\normalsize\bfseries\sffamily\raggedright\@collaborationImg}\par
\afterCollaborationImgSpace
%% I leave the size and font so that if there are two collaboration
%% they can be linked with an 'and'
\fi
% Author
{\bfseries\raggedright\sffamily\the\auth@toks\par}
\afterAuthorSpace
% Affiliation
\ifaffil\begin{list}{}{%
\setlength{\leftmargin}{0.28cm}%
\setlength{\labelsep}{0pt}%
\setlength{\itemsep}{\affiliationsSep}%
\setlength{\topsep}{-\parskip}}
\itshape\small%
\the\affil@toks
\end{list}\fi
\afterAffiliationSpace
% E-mail
\ifemailadd %% if emailadd is true
\noindent\hspace{0.28cm}\begin{minipage}[l]{.9\textwidth}
\begin{flushleft}
\textit{E-mail:} \the\email@toks
\end{flushleft}
\end{minipage}
\else %% if emailaddfalse do nothing
\PackageWarningNoLine{\jname}{E-mails are missing.\MessageBreak Plese use \protect\emailAdd\space macro to provide e-mails.}
\fi
\afterEmailSpace
%Erratum or addendum
\if!\@xtum!\else\noindent{\@xtum}\afterXtumSpace\fi
% Abstract
\if!\@abstract!\else\noindent{\renewcommand\baselinestretch{.9}\textsc{Abstract:}}\ \@abstract\afterAbstractSpace\fi
% Keywords
\if!\@keywords!\else\noindent{\textsc{Keywords:}} \@keywords\afterKeywordsSpace\fi
% Arxivnumber
\if!\@arxivnumber!\else\noindent{\textsc{ArXiv ePrint:}} \href{http://arxiv.org/abs/\@arxivnumber}{\@arxivnumber}\afterArxivSpace\fi
% Dedication
\if!\@dedicated!\else\vbox{\small\it\raggedleft\@dedicated}\afterDedicatedSpace\fi
\ifnotoc\else
\iftoccontinuous\else\newpage\fi
\beforetochook\hrule
\tableofcontents
\afterTocSpace
\hrule
\afterTocRuleSpace
\fi
\setcounter{footnote}{0}
\pagestyle{myplain}\pagenumbering{arabic}
} % close the \renewcommand\maketitle{
\renewcommand{\baselinestretch}{1.1}\normalsize
\renewcommand{\@dotsep}{10000}
\newcommand\ps@myplain{
\pagenumbering{arabic}
\renewcommand\@oddfoot{\hfill-- \thepage\ --\hfill}
\renewcommand\@oddhead{}}
\let\ps@plain=\ps@myplain
\newcommand\ps@titlepage{\renewcommand\@oddfoot{}\renewcommand\@oddhead{}}
\numberwithin{equation}{section}
\renewcommand\section{\@startsection{section}{1}{\z@}%
                                   {-3.5ex \@plus -1.3ex \@minus -.7ex}%
                                   {2.3ex \@plus.4ex \@minus .4ex}%
                                   {\normalfont\large\bfseries}}
\renewcommand\subsection{\@startsection{subsection}{2}{\z@}%
                                   {-2.3ex\@plus -1ex \@minus -.5ex}%
                                   {1.2ex \@plus .3ex \@minus .3ex}%
                                   {\normalfont\normalsize\bfseries}}
\renewcommand\subsubsection{\@startsection{subsubsection}{3}{\z@}%
                                   {-2.3ex\@plus -1ex \@minus -.5ex}%
                                   {1ex \@plus .2ex \@minus .2ex}%
                                   {\normalfont\normalsize\bfseries}}
\renewcommand\paragraph{\@startsection{paragraph}{4}{\z@}%
                                   {1.75ex \@plus1ex \@minus.2ex}%
                                   {-1em}%
                                   {\normalfont\normalsize\bfseries}}
\renewcommand\subparagraph{\@startsection{subparagraph}{5}{\parindent}%
                                   {1.75ex \@plus1ex \@minus .2ex}%
                                   {-1em}%
                                   {\normalfont\normalsize\bfseries}}
\def\fnum@figure{\textbf{\figurename\nobreakspace\thefigure}}
\def\fnum@table{\textbf{\tablename\nobreakspace\thetable}}
\long\def\@makecaption#1#2{%
  \vskip\abovecaptionskip
  \sbox\@tempboxa{\small #1. #2}%
  \ifdim \wd\@tempboxa >\hsize
    \small #1. #2\par
  \else
    \global \@minipagefalse
    \hb@xt@\hsize{\hfil\box\@tempboxa\hfil}%
  \fi
  \vskip\belowcaptionskip}
\renewenvironment{thebibliography}[1]{%
\begin{oldthebibliography}{#1}%
\small%
\raggedright%
\setlength{\itemsep}{5pt plus 0.2ex minus 0.05ex}%
}%
{%
\end{oldthebibliography}%
}
\begin{document}

%%%%%%%%%%%%%%%%%%±êÌâÒ³%%%%%%%%%%%%%%%%%%%%%%%%%%%%%&&&&&&&&&&&&&&&&&&&&&&

%\title{\boldmath Renormalization for singular-potential scattering ${}^{\dag}$ \footnotetext{$\dag$ This is an
%enlarged version of the corresponding content in Ref. \cite{li2016scattering}.}}%

\renewcommand{\thefootnote}{\fnsymbol{footnote}}%°Ñ½Å×¢±àºÅ±ä³É·ÇÊý×ÖÇé¿ö

\title{\boldmath Canonical partition functions: ideal quantum gases, interacting classical
gases, and interacting quantum gases}%

% more complex case: 4 authors, 3 institutions, 2 footnotes
\author[]{Chi-Chun Zhou}
%\author[b,a]{Wen-Du Li,}
\author[*]{and Wu-Sheng Dai}\note{daiwusheng@tju.edu.cn.}

% The "\note" macro will give a warning: "Ignoring empty anchor..."
% you can safely ignore it.

\affiliation[]{Department of Physics, Tianjin University, Tianjin 300350, P.R. China}
%\affiliation[a]{Department of Physics, Tianjin University, Tianjin 300072, P.R. China}
%\affiliation[b]{Theoretical Physics Division, Chern Institute of Mathematics, Nankai University, Tianjin, 300071, P. R. China}
%\affiliation[c]{LiuHui Center for Applied Mathematics, Nankai University \& Tianjin University, Tianjin 300072, P.R. China}
%\affiliation[c]{DP School}

% e-mail addresses: one for each author, in the same order as the authors
%\emailAdd{Ccc@one.edu.cn}
%\emailAdd{second@asas.edu}
%\emailAdd{daiwusheng@tju.edu.cn}
%\emailAdd{fourth@one.univ}

%\title{\boldmath A title with some math: $x=1$}
%% %simple case: 2 authors, same institution
%% \author{A. Uthor}
%% \author{and A. Nother Author}
%% \affiliation{Institution,\\Address, Country}

% more complex case: 4 authors, 3 institutions, 2 footnotes
%\author[a,b,1]{F. Irst,\note{Corresponding author.}}
%\author[c]{S. Econd,}
%\author[a,2]{T. Hird\note{Also at Some University.}}
%\author[a,2]{and Fourth}

% The "\note" macro will give a warning: "Ignoring empty anchor..."
% you can safely ignore it.

%\affiliation[a]{One University,\\some-street, Country}
%\affiliation[b]{Another University,\\different-address, Country}
%\affiliation[c]{A School for Advanced Studies,\\some-location, Country}

% e-mail addresses: one for each author, in the same order as the authors

%\emailAdd{first@one.univ}
%\emailAdd{second@asas.edu}
%\emailAdd{third@one.univ}
%\emailAdd{fourth@one.univ}

%\date{date}

\abstract{In statistical mechanics, for a system with fixed number of particles, e.g., a
finite-size system, strictly speaking, the thermodynamic quantity needs to be
calculated in the canonical ensemble. Nevertheless, the calculation of the
canonical partition function is difficult.\textbf{ }In this paper, based on
the mathematical theory of the symmetric function, we suggest a method for the
calculation of the canonical partition function of ideal quantum gases,
including ideal Bose, Fermi, and Gentile gases. Moreover, we express the
canonical partition functions of interacting classical and quantum gases given
by the classical and quantum cluster expansion methods in terms of\ the Bell
polynomial in mathematics. The virial coefficients of ideal Bose, Fermi, and
Gentile gases is calculated from the exact canonical partition function. The
virial coefficients of interacting classical and quantum gases is calculated
from the canonical partition function by using the expansion of the Bell
polynomial, rather than calculated from the grand canonical potential.
}
\keywords{Canonical partition function, Symmetric function, Bell polynomial, Ideal
quantum gas, Interacting classical gas, Interacting quantum gas, Virial coefficient}

\maketitle
\flushbottom
%%%%%%%%%%%%%%%%%%±êÌâÒ³½áÊø%%%%%%%%%%%%%%%%%%%%%%%%%%%%%&&&&&&&&&&&&&&&&&&&

%%%%%%%%%%ÕýÎÄ¿ªÊ¼

\section{Introduction}

It is often difficult to calculate the canonical partition function. In the
canonical ensemble, there is a constraint on the total number of particles.
Once there also exist interactions, whether quantum exchange interactions or
classical inter-particle interactions, the calculation in canonical ensembles
becomes complicated. In ideal quantum gases, there are quantum exchange
interactions among gas molecules; in interacting classical gases, there are
classical interactions among gas molecules; in interacting quantum gases,
there are both quantum exchange interactions and classical inter-particle
interactions among gas molecules. In an interacting system there is no
single--particle state, and in a canonical ensemble there is a constraint on
the total particle number. Therefore, in the calculation of the canonical
partition function for such systems, we have to take these two factors into
account simultaneously, and this makes the calculation difficult. It is a
common practice to avoid such a difficulty by, instead of canonical ensembles,
turning to grand canonical ensembles in which there is no constraint on the
total particle number. In grand canonical ensembles, instead of the particle
number, one uses the average particle number. In this paper, we suggest a
method to calculate canonical partition functions.

The method suggested in the present paper is based on the mathematical theory
of the symmetric function and the Bell polynomial.

The symmetric function was first studied by P. Hall in the 1950s
\cite{Littlewood2005The,Macdonald1998Symmetric}. Now it is studied as the ring
of symmetric functions in algebraic combinatorics. The symmetric function is
closely related to the integer partition in number theory and plays an
important role in the theory of group representations \cite{Andrews1976The}.

The Bell polynomial is a special function in combinatorial mathematics
\cite{wheeler1987bell,Collins2001The}. It is useful in the study of set
partitions and is related to the Stirling and Bell number. For more details of
the Bell polynomial, one can refer to Refs.
\cite{wheeler1987bell,Collins2001The,voinov1994power,abbas2005new,kruchinin2011derivation}%
.

First, using the method, we will calculate the exact canonical partition
function for ideal quantum gases, including ideal Bose gases, ideal Fermi
gases, and ideal Gentile gases. In ideal quantum gases, there exist quantum
exchange interactions, as will be shown later, the canonical partition
functions are symmetric functions. Based on the theory of the symmetric
function, the canonical partition function is represented as\ a linear
combination of $S$-functions. The $S$-function is an important class of
symmetric functions, and they are closely related to integer partitions
\cite{Andrews1976The} and permutation groups
\cite{Littlewood2005The,Macdonald1998Symmetric}.

Second, we show that the canonical partition functions of interacting
classical gases and interacting quantum gases given by the classical and
quantum cluster expansions are the Bell polynomial. In interacting classical
gases, there exist classical inter-particle interactions and in interacting
quantum gases, there exist both quantum exchange interactions and classical
inter-particle interactions. After showing that the canonical partition
function of interacting gases is the Bell polynomial, we can calculate, e.g.,
the virial coefficient by the property of the Bell polynomial.

We will calculate the virial coefficients of ideal Bose gases, ideal Fermi
gases, and ideal Gentile gases from their exact canonical partition functions.
Moreover, based on the property of the Bell polynomial, we directly calculate
the virial coefficients of interacting classical and quantum gases from the
canonical partition function directly, rather than, as in the cluster
expansion method, from the grand canonical potential. We also compare the
virial coefficients calculated in the canonical ensemble with those calculated
in the grand canonical ensemble. From the results one can see that the virial
coefficients are different at small $N$ and they are in consistency with each
other as $N$ goes to infinity.

There are many studies on canonical partition functions. Some canonical
partition functions of certain statistical models are calculated, for
examples, the canonical partition function for a two-dimensional binary
lattice \cite{kaufman1949crystal}, the canonical partition function for
quon\ statistics \cite{goodison1994canonical}, a general formula for the
canonical partition function expressed as sums of the $S$-function for a
parastatistical system \cite{chaturvedi1996canonical}, the canonical partition
function of freely jointed chain model \cite{mazars1998canonical}, an exact
canonical partition function for ideal Bose gases calculated by the recursive
method \cite{park2010thermodynamic}, and the canonical partition function of
fluids calculated by simulations \cite{do2011rapid}. Some methods for the
calculation of the\textbf{ }canonical partition function are developed, for
examples, the numerical method
\cite{park2010thermodynamic,glaum2007condensation}, the recursion relation of
the canonical partition function
\cite{glaum2007condensation,pratt2000canonical,wang2009thermodynamics}. The
behavior of the canonical partition function is also discussed, for example,
the zeroes of the canonical partition function
\cite{lee2012exact,taylor2013partition} and the classical limit of the
quantum-mechanical canonical partition function \cite{seglar2013classical}.

There are also many studies on the symmetric function and the Bell polynomial.
In mathematics, some studies devote to the application of the $S$-function,
also known as the Schur function, for example, the application of $S$-function
in the symmetric function space \cite{lapointe2003schur}, the application of
the $S$-function in probability and statistics \cite{proschan1977schur}, and
the supersymmetric $S$-function \cite{moens2007supersymmetric}. In physics,
there are also applications of the $S$-function, for example, the canonical
partition function of a parastatistical system is expressed as a sum of
$S$-functions \cite{chaturvedi1996canonical}, the factorial $S$-function, a
generalizations of the $S$-function, times a deformation of the Weyl
denominator may be expressed as the partition function of a particular
statistical-mechanical system \cite{bump2011factorial} and the application to
statistical mechanics and representation theory \cite{gorinasymptotics}. The
Bell polynomial can be used in solving the water wave equation
\cite{yun2013integrability}, in seventh-order Sawada--Kotera--Ito equation
\cite{shen2014bell}, in combinatorial Hopf algebras \cite{aboud2014bell}.
Moreover, various other applications can be found in
\cite{connon2010various,qi2015several,vasudevan1984combinants,collins2001role}.

There are studies devoted to the application of the Bell polynomial in
statistical mechanics. The canonical partition function of an interacting
system is expressed in terms of the Bell polynomial by expanding the grand
canonical partition function in power series of the fugacity $z$
\cite{fronczak2013cluster}. As further applications of the bell polynomial, a
microscopic interpretation of the grand potential is given in Ref.
\cite{fronczak2012microscopic}. The partition function of the zero-field Ising
model on the infinite square lattice is given by resorting to the Bell
polynomial \cite{siudem2016exact}. Ref. \cite{siudem2013partition} proves an
assertion suggested in Ref. \cite{fronczak2012microscopic}, which provides a
formula for the canonical partition function using Bell polynomials and shows
that the Bell polynomial is very useful in statistical mechanics. In this
paper, we will compare our method in obtaining the canonical partition
function of interacting gases with that in these literature.

This paper is organized as follow. In Sec. \ref{review}, we give a brief
review for the integer partition and the symmetric function. In Secs.
\ref{cBg} and \ref{cFg}, the canonical partition functions of ideal Bose and
Fermi gases are given. In Sec. \ref{cGg}, the canonical partition function of
ideal Gentile gases is given. In Secs. \ref{classical} and \ref{interacting},
we show that the canonical partition functions of interacting classical and
quantum gases are indeed Bell polynomials. In Sec. \ref{virial}, we calculate
the virial coefficients of ideal Bose, Fermi, and Gentile gases in the
canonical ensemble and compare them with those calculated in the grand
canonical ensemble. In Sec. \ref{virialint}, we calculate the virial
coefficients of interacting classical and quantum gases in the canonical
ensemble and compare them with those calculated in the grand canonical
ensemble. The conclusions are summarized in Sec. \ref{Conclusionoutlook}. In
appendix \ref{appendix}, the details of the calculation about Gentile gases is given.

\section{The integer partition and the symmetric function: a brief review
\label{review}}

The main result of the present paper is to represent the canonical partition
function as a linear combination of symmetric functions. The symmetric
function is closely related to the problem of integer partitions. In this
section, we give a brief review of symmetric functions and integer partitions.
For more detail of this part, one can refer to Refs.
\cite{Andrews1976The,Littlewood2005The,Macdonald1998Symmetric}.

\textit{Integer partitions.} The integer partitions of $N$ are representations
of $N$ in terms of other positive integers which sum up to $N$. The integer
partition is denoted by $\left(  \lambda\right)  =\left(  \lambda_{1}%
,\lambda_{2},...\right)  $, where $\lambda_{1}$, $\lambda_{2}$, $...$ are
called elements, and they are arranged in descending order. For example, the
integer partition of $4$ are $\left(  \lambda\right)  =\left(  4\right)  $,
$\left(  \lambda\right)  ^{\prime}=\left(  3,1\right)  $, $\left(
\lambda\right)  ^{\prime\prime}=\left(  2^{2}\right)  $, $\left(
\lambda\right)  ^{\prime\prime\prime}=\left(  2,1^{2}\right)  $, and $\left(
\lambda\right)  ^{\prime\prime\prime\prime}=\left(  1^{4}\right)  $, where,
e.g., the superscript in $1^{2}$ means $1$ appearing twice, the superscript in
$2^{2}$ means $2$ appearing twice, and so on.

\textit{Arranging integer partitions in a prescribed order.} An integer $N$
has many integer partitions and the unrestricted partition function $P\left(
N\right)  $ counts the number of integer partitions \cite{Andrews1976The}. For
a given $N$, one\ arranges the integer partition in the following order:
$\left(  \lambda\right)  $, $\left(  \lambda\right)  ^{\prime}$, when
$\lambda_{1}>\lambda_{1}^{\prime}$; $\left(  \lambda\right)  $, $\left(
\lambda\right)  ^{\prime}$, when $\lambda_{1}=\lambda_{1}^{\prime}$ but
$\lambda_{2}>\lambda_{2}^{\prime}$; and so on. One keeps comparing
$\lambda_{i}$ and $\lambda_{i}^{\prime}$ until all the integer partitions of
$N$ are arranged in a prescribed order. We denote $\left(  \lambda\right)
_{J}$ the $Jth$ integer partition of $N$. For example, the integer partitions
arranged in a prescribed order of $4$ are $\left(  \lambda\right)
_{1}=\left(  4\right)  $, $\left(  \lambda\right)  _{2}=\left(  3,1\right)  $,
$\left(  \lambda\right)  _{3}=\left(  2^{2}\right)  $, $\left(  \lambda
\right)  _{4}=\left(  2,1^{2}\right)  $, and $\left(  \lambda\right)
_{5}=\left(  1^{4}\right)  $. For any integer $N$, the first integer partition
is always $\left(  \lambda\right)  _{J=1}=\left(  N\right)  $ and the last
integer partition is always $\left(  \lambda\right)  _{J=P\left(  N\right)
}=\left(  1^{N}\right)  $. We denote $\lambda_{J,i}$ the $ith$ element in the
integer partition $\left(  \lambda\right)  _{J}$. For example, for the second
integer partition of $4$, i.e., $\left(  \lambda\right)  _{2}=\left(
3,1\right)  $, the elements of $\left(  \lambda\right)  _{2}$ are
$\lambda_{2,1}=3$ and $\lambda_{2,2}=1$, respectively. For any integer $N$, we
always have $\lambda_{1,1}=N$, and $\lambda_{P\left(  N\right)  ,1}%
=\lambda_{P\left(  N\right)  ,2}=...=\lambda_{P\left(  N\right)  ,N}=1$.

\textit{The symmetric function.} A function $f\left(  x_{1},x_{2}%
,...,x_{n}\right)  $ is called symmetric functions if it is invariant under
the action of the permutation group $S_{n}$; that is, for $\sigma\in S_{n}$,
\begin{equation}
\sigma f\left(  x_{1},x_{2},..,x_{n}\right)  =f\left(  x_{\sigma\left(
1\right)  },x_{\sigma\left(  2\right)  },..,x_{\sigma\left(  n\right)
}\right)  =f\left(  x_{1},x_{2},..,x_{n}\right)  ,
\end{equation}
where $x_{1}$, $x_{2}$,..., $x_{n}$ are $n$ independent variables of $f\left(
x_{1},x_{2},...,x_{n}\right)  $.

\textit{The symmetric function }$m_{\left(  \lambda\right)  }\left(
x_{1},x_{2},...,x_{l}\right)  $\textit{.} The symmetric function $m_{\left(
\lambda\right)  }\left(  x_{1},x_{2},...,x_{l}\right)  $ is an important kind
of symmetric functions. One of its definition is \cite{Macdonald1998Symmetric}%
\begin{equation}
m_{\left(  \lambda\right)  _{I}}\left(  x_{1},x_{2},...,x_{l}\right)
=\sum_{perm}x_{i_{1}}^{\lambda_{I1}}x_{i_{2}}^{\lambda_{I2}}....x_{i_{N}%
}^{\lambda_{IN}}, \label{mdc}%
\end{equation}
where $\sum_{perm}$ indicates that the summation runs over all possible
monotonically increasing permutations of $x_{i}$, $\left(  \lambda\right)
_{I}$ is the $Ith$ integer partition of $N$, and $\lambda_{Ij}$ is the $jth$
element in $\left(  \lambda\right)  _{I}$. Here, the number of variables is
$l$, and $l$ should be larger than $N$ as required in the definition. $l$, the
number of $x_{i}$, could be infinite and in the problem considered in this
paper, $l$ is always infinite. Each integer partition $\left(  \lambda\right)
_{I}$ of $N$ corresponds to a symmetric function $m_{\left(  \lambda\right)
_{I}}\left(  x_{1},x_{2},...\right)  $ and vise versa.

\textit{The }$S$\textit{-function }$\left(  \lambda\right)  \left(
x_{1},x_{2},...\right)  $\textit{. }The $S$-function $\left(  \lambda\right)
\left(  x_{1},x_{2},...\right)  $, another important kind of symmetric
functions, among their many definitions, can be defined as
\cite{Littlewood2005The,Macdonald1998Symmetric}
\begin{equation}
\left(  \lambda\right)  _{I}\left(  x_{1},x_{2},...\right)  =\sum
_{J=1}^{P\left(  N\right)  }\frac{g_{I}}{N!}\chi_{J}^{I}%
%TCIMACRO{\dprod \limits_{m=1}^{k}}%
%BeginExpansion
{\displaystyle\prod\limits_{m=1}^{k}}
%EndExpansion
\left(  \sum_{i}x_{i}^{m}\right)  ^{a_{J,m}}, \label{ss}%
\end{equation}
where $\left(  \lambda\right)  _{I}$ is the $Ith$ integer partition of $N$,
$a_{J,m}$ counts the times of the number $m$ appeared in $\left(
\lambda\right)  _{J}$, and $\chi_{J}^{I}$ is the simple characteristic of the
permutation group of order $N$. $g_{I}$ is defined as $g_{I}=N!\left(
%TCIMACRO{\dprod \limits_{j=1}^{N}}%
%BeginExpansion
{\displaystyle\prod\limits_{j=1}^{N}}
%EndExpansion
j^{a_{I,j}}a_{I,j}!\right)  ^{-1}$. Here we only consider the case that the
number of variables is always infinite. Each integer partition $\left(
\lambda\right)  _{I}$ of $N$ corresponds to an $S$-function $\left(
\lambda\right)  _{I}\left(  x_{1},x_{2},...\right)  $ and vice versa.

\textit{The relation between the }$S$\textit{-function }$\left(
\lambda\right)  \left(  x_{1},x_{2},...\right)  $\textit{ and the symmetric
function }$m_{\left(  \lambda\right)  }\left(  x_{1},x_{2},...\right)  $.
There is a relation between the $S$-function $\left(  \lambda\right)  \left(
x_{1},x_{2},...\right)  $ and the symmetric function $m_{\left(
\lambda\right)  }\left(  x_{1},x_{2},...\right)  $: the $S$-function $\left(
\lambda\right)  \left(  x_{1},x_{2},...\right)  $ can be expressed as a linear
combination of the symmetric function\textit{ }$m_{\left(  \lambda\right)
}\left(  x_{1},x_{2},...\right)  $,%

\begin{equation}
\left(  \lambda\right)  _{K}\left(  x_{1},x_{2},...\right)  =\sum
_{I=1}^{P\left(  N\right)  }k_{K}^{I}m_{\left(  \lambda\right)  _{I}}\left(
x_{1},x_{2},...\right)  , \label{smgx}%
\end{equation}
where the coefficient $k_{K}^{I}$ is the Kostka number
\cite{Macdonald1998Symmetric}.

\section{The canonical partition function of ideal Bose gases \label{cBg}}

In this section, we present an exact expression of the canonical partition
function of ideal Bose gases.

\begin{theorem}
For ideal Bose gases, the canonical partition function is%
\begin{equation}
Z_{BE}\left(  \beta,N\right)  =\left(  N\right)  \left(  e^{-\beta
\varepsilon_{1}},e^{-\beta\varepsilon_{2}},...\right)  , \label{1111}%
\end{equation}
where $\left(  N\right)  \left(  x_{1},x_{2},...\right)  $ is the $S$-function
corresponding to the integer partition $\left(  N\right)  $ defined by Eq.
(\ref{ss}) and $\varepsilon_{i}$ is the single-particle eigenvalue.
\end{theorem}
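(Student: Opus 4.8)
The plan is to evaluate $Z_{BE}(\beta,N)$ straight from its statistical-mechanical definition as a Boltzmann sum over the many-body eigenstates of $N$ non-interacting bosons, and then to recognize the resulting polynomial in $x_i=e^{-\beta\varepsilon_i}$ as the $S$-function $(N)$. First I would label each many-body eigenstate by the occupation numbers $(n_1,n_2,\dots)$ of the single-particle levels $\varepsilon_i$; Bose statistics allows every $n_i$ to be an arbitrary non-negative integer, restricted only by the canonical constraint $\sum_i n_i=N$. Because the gas is ideal, the total energy splits as $E=\sum_i n_i\varepsilon_i$ and the Boltzmann weight factorizes, $e^{-\beta E}=\prod_i x_i^{n_i}$. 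Summing over all admissible configurations then gives
\[
Z_{BE}(\beta,N)=\sum_{\sum_i n_i=N}\ \prod_i x_i^{n_i}.
\]

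The second step is to identify this constrained sum. It is precisely the sum of all monomials of total degree $N$ in the variables $x_i$, i.e. the complete homogeneous symmetric function, which I would denote $h_N(x_1,x_2,\dots)$. Grouping these monomials according to their exponent pattern, which is an integer partition of $N$, I can rewrite it in terms of the monomial symmetric functions of Eq.~(\ref{mdc}) as $h_N=\sum_{I=1}^{P(N)}m_{(\lambda)_I}(x_1,x_2,\dots)$. Here the physical setting, with infinitely many single-particle levels, is exactly the infinite-variable case assumed throughout the definitions, so the requirement $l>N$ is automatically met.

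It remains to match $\sum_I m_{(\lambda)_I}$ with the $S$-function $(N)(x_1,x_2,\dots)$, and there are two clean routes. Via Eq.~(\ref{smgx}) the claim reduces to showing that every Kostka number attached to the one-row partition $(N)$ equals $1$, which is immediate since a single row filled in weakly increasing order admits exactly one filling of each prescribed content. Alternatively, via the power-sum definition Eq.~(\ref{ss}), the partition $(N)$ corresponds to the symmetric (trivial) representation of $S_N$, whose simple characteristic $\chi_J^{I}$ equals $1$ on every conjugacy class $J$; inserting this and using the classical expansion of $h_N$ in power sums (equivalently the generating function $\sum_N h_N t^N=\prod_i(1-x_i t)^{-1}$) reproduces $(N)=h_N$. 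Substituting $x_i=e^{-\beta\varepsilon_i}$ then yields Eq.~(\ref{1111}). I expect the only real obstacle to lie in this last identification rather than in the physics: one must justify $(N)=h_N$ from the paper's own definitions—either the Kostka coefficients of Eq.~(\ref{smgx}) or the trivial-representation characters of Eq.~(\ref{ss})—and check the one-row bookkeeping carefully, whereas recognizing the occupation-number sum as $h_N$ is the conceptual heart and is essentially immediate.
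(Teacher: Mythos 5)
Your proposal is correct and follows essentially the same route as the paper: rewrite the Boltzmann sum over occupation numbers as the sum $\sum_{I=1}^{P(N)}m_{(\lambda)_I}(e^{-\beta\varepsilon_1},e^{-\beta\varepsilon_2},\dots)$ of monomial symmetric functions over all partitions of $N$, and then invoke the identity that this sum (the complete homogeneous symmetric function $h_N$) equals the one-row $S$-function $(N)$. The only difference is that where the paper simply cites Macdonald for $\sum_I m_{(\lambda)_I}=(\lambda)_1=(N)$, you actually justify it, via the one-row Kostka numbers $k_{(N)}^{I}=1$ or the trivial character in Eq.~(\ref{ss}), which is a welcome strengthening but not a different argument.
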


\begin{proof}
By definition, the canonical partition function of an $N$-body system is%
\begin{equation}
Z\left(  \beta,N\right)  =\sum_{s}e^{-\beta E_{s}}, \label{zzdy}%
\end{equation}
where $E_{s}$ denotes the eigenvalue of the system with the subscript $s$
labeling the states, $\beta=1/\left(  kT\right)  $, $k$ is the Boltzmann
constant, and $T$ is the temperature.

Rewrite Eq. (\ref{zzdy}) in terms of the occupation number. For an ideal
quantum gas, particles are randomly distributed on the single-particle states.
Collect the single-particle states occupied by particles by weakly increasing
order of energy, i.e., $\varepsilon_{i_{1}}\leq\varepsilon_{i_{2}}%
\leq\varepsilon_{i_{3}}\leq\cdots$, and denote the number of particle
occupying the $jth$ state by $a_{j}$. For Bose gases, there is no restriction
on $a_{j}$, since the maximum occupation number is infinite. Note that here
$a_{i}\geq1$ rather than $a_{i}\geq0$. This is because here only the occupied
states are reckoned in. Eq. (\ref{zzdy}) can be re-expressed as
\cite{schrodinger1989statistical}%
\begin{equation}
Z_{BE}\left(  \beta,N\right)  =\sum_{s}e^{-\beta E_{s}}=\sum_{\left\{
a_{i}\right\}  }\sum_{perm}e^{-\beta\varepsilon_{i_{1}}a_{1}-\beta
\varepsilon_{i_{2}}a_{2}.....}, \label{bszzdy}%
\end{equation}
where the sum $\sum_{perm}$ runs over all possible mononical increasing
permutation of $\varepsilon_{i}$ and the sum $\sum_{\left\{  a_{i}\right\}  }$
runs over all the possible occupation numbers restricted by the constraint%
\begin{equation}
\sum a_{i}=N,\text{ \ \ }a_{i}\geq1. \label{constraint}%
\end{equation}
The constraint (\ref{constraint}) ensures that the total number of particles
in the system is a constant.

Equalling the occupation number $a_{i}$ and the element $\lambda_{i}$,
equalling the variable $x_{i}$ and $e^{-\beta\varepsilon_{i}}$, and comparing
the canonical partition function, Eq. (\ref{bszzdy}), with the definition of
the symmetric function $m_{\left(  \lambda\right)  }\left(  x_{1}%
,x_{2},...\right)  $, Eq. (\ref{mdc}), give
\begin{equation}
Z_{BE}\left(  \beta,N\right)  =\sum_{I=1}^{P\left(  N\right)  }m_{\left(
\lambda\right)  _{I}}\left(  e^{-\beta\varepsilon_{1}},e^{-\beta
\varepsilon_{2}},...\right)  . \label{bs mid}%
\end{equation}
Substituting Eq. (\ref{smgx}) into Eq. (\ref{bs mid}) and setting all the
coefficients $k_{K}^{I}$ to $1$ give Eq. (\ref{1111}), where the relation
$\sum_{I=1}^{P\left(  N\right)  }m_{\left(  \lambda\right)  _{I}}\left(
e^{-\beta\varepsilon_{1}},e^{-\beta\varepsilon_{2}},...\right)  =\left(
\lambda\right)  _{1}\left(  e^{-\beta\varepsilon_{1}},e^{-\beta\varepsilon
_{2}},...\right)  $ \cite{Macdonald1998Symmetric} and $\left(  \lambda\right)
_{1}=\left(  N\right)  $ are used.
\end{proof}

In a word, the canonical partition function of ideal Bose gases is a $S$-function.

Moreover, by one kind of expression of the $S$-function given in Ref.
\cite{Littlewood2005The,Macdonald1998Symmetric}, we can obtain the canonical
partition function of ideal Bose gases given in Ref.
\cite{park2010thermodynamic}. In Ref. \cite{park2010thermodynamic}, the author
obtained a recurrence relation of the canonical partition function:
$Z_{BE}\left(  \beta,N\right)  =\frac{1}{N}\sum_{k=1}^{N}\left(  \sum
_{n}e^{-\beta k\varepsilon_{n}}\right)  Z_{BE}\left(  \beta,N-k\right)  $,
based on a result given by Mastsubara \cite{matsubara1951quantum} and Feynman
\cite{feynman2005statistical}. Starting from this recurrence relation, the
author obtains the canonical partition function of ideal Bose gases in matrix
form by introducing an $\infty\times\infty$ triangularized matrix and
computing the inverse of a matrix.

In Refs. \cite{Littlewood2005The,Macdonald1998Symmetric}, the $S$-function is
represented as the determinant of a certain matrix:
\begin{equation}
\left(  N\right)  \left(  x_{1},x_{2},...\right)  =\frac{1}{N!}\det\left(
\begin{array}
[c]{ccccc}%
\sum_{i}x_{i} & -1 & 0 & ... & 0\\
\sum_{i}x_{i}^{2} & \sum_{i}x_{i} & -2 & ... & 0\\
\sum_{i}x_{i}^{3} & \sum_{i}x_{i}^{2} & \sum_{i}x_{i} & ... & ...\\
... & ... & ... & ... & -\left(  N-1\right) \\
\sum_{i}x_{i}^{N} & \sum_{i}x_{i}^{N-1} & \sum_{i}x_{i}^{N-2} & ... & \sum
_{i}x_{i}%
\end{array}
\right)  . \label{Shanshu}%
\end{equation}
The canonical partition function (\ref{1111}), by Eq. (\ref{Shanshu}), can be
equivalently expressed as
\begin{align}
Z_{BE}\left(  \beta,N\right)   &  =\left(  N\right)  \left(  e^{-\beta
\varepsilon_{1}},e^{-\beta\varepsilon_{2}},...\right) \nonumber\\
&  =\frac{1}{N!}\det\left(
\begin{array}
[c]{ccccc}%
Z\left(  \beta\right)  & -1 & 0 & ... & 0\\
Z\left(  2\beta\right)  & Z\left(  \beta\right)  & -2 & ... & 0\\
Z\left(  3\beta\right)  & Z\left(  2\beta\right)  & Z\left(  \beta\right)  &
... & ...\\
... & ... & ... & ... & -\left(  N-1\right) \\
Z\left(  N\beta\right)  & Z\left(  N\beta-\beta\right)  & Z\left(
N\beta-2\beta\right)  & ... & Z\left(  \beta\right)
\end{array}
\right)  , \label{bs}%
\end{align}
where%
\begin{equation}
Z\left(  \beta\right)  =\sum_{i}e^{-\beta\varepsilon_{i}} \label{single}%
\end{equation}
is the single-particle partition function of ideal classical gases.\textbf{
}For a free ideal classical gas, the single-particle partition function is
\begin{equation}
Z\left(  \beta\right)  =\frac{V}{\lambda^{3}} \label{single3}%
\end{equation}
\textbf{ }with $V$ the volume and $\lambda=h\sqrt{\frac{\beta}{2\pi m}}$ the
thermal wave length\textbf{ }\cite{Pathria1996Statistical}\textbf{. }

It is worthy to note that different expressions of the $S$-function will give
different expression of canonical partition functions.

\section{The canonical partition function of ideal Fermi gases \label{cFg}}

In this section, we present an exact expression of the canonical partition
function of ideal Fermi gases.

\begin{theorem}
For ideal Fermi gases, the canonical partition function is
\begin{equation}
Z_{FD}\left(  \beta,N\right)  =\left(  1^{N}\right)  \left(  e^{-\beta
\varepsilon_{1}},e^{-\beta\varepsilon_{2}},...\right)  , \label{2222}%
\end{equation}
where $\left(  1^{N}\right)  \left(  x_{1},x_{2},...\right)  $ is the
$S$-function corresponding to the integer partition $\left(  1^{N}\right)  $
defined by Eq. (\ref{ss}).
\end{theorem}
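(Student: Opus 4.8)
The plan is to mirror the proof of the Bose-gas theorem, changing only the constraint that the Pauli exclusion principle imposes on the occupation numbers. First I would begin from the definition (\ref{zzdy}) and rewrite the sum over states in terms of occupation numbers of the single-particle levels, collected in weakly increasing order of energy exactly as in (\ref{bszzdy}). The one essential change is the allowed range of the occupation numbers: for fermions each single-particle state carries at most one particle, so every occupied level has occupation number exactly $1$, and the constraint (\ref{constraint}) is replaced by the selection of $N$ distinct levels. The canonical partition function therefore collapses to
\[
Z_{FD}(\beta,N)=\sum_{i_{1}<i_{2}<\cdots<i_{N}}e^{-\beta\left(\varepsilon_{i_{1}}+\varepsilon_{i_{2}}+\cdots+\varepsilon_{i_{N}}\right)} .
\]

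Next I would identify this expression with a single monomial symmetric function. Setting $x_{i}=e^{-\beta\varepsilon_{i}}$ and comparing with the definition (\ref{mdc}), the right-hand side is exactly $m_{(1^{N})}(x_{1},x_{2},\dots)$, the monomial symmetric function attached to the partition $(1^{N})$ of $N$ into $N$ ones. Thus
\[
Z_{FD}(\beta,N)=m_{(1^{N})}\!\left(e^{-\beta\varepsilon_{1}},e^{-\beta\varepsilon_{2}},\dots\right) .
\]

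Finally I would promote this monomial symmetric function to the corresponding $S$-function. In contrast to the Bose case, where one had to sum all $P(N)$ monomial symmetric functions to rebuild the single $S$-function $(N)$, here no summation is required: the partition $(1^{N})$ is minimal in dominance order, so in the expansion (\ref{smgx}) of $(1^{N})$ the only nonzero Kostka coefficient is the one coupling $(1^{N})$ to itself, equal to $1$. Hence $(1^{N})(x_{1},x_{2},\dots)=m_{(1^{N})}(x_{1},x_{2},\dots)$, and substituting $x_{i}=e^{-\beta\varepsilon_{i}}$ gives (\ref{2222}).

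The step I expect to require the most care is the first one: one must check that imposing $a_{i}\in\{0,1\}$ on the occupation-number sum in (\ref{bszzdy}), together with the permutation sum $\sum_{perm}$, reproduces exactly the sum over strictly increasing index tuples with unit coefficients and no overcounting, so that the answer is $m_{(1^{N})}$ and not a multiple of it. The concluding identity $(1^{N})=m_{(1^{N})}$ (equivalently, that the $S$-function of a single column is the elementary symmetric function) is a standard fact that I would read off directly from (\ref{smgx}).
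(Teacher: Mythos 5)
Your proposal is correct and follows essentially the same route as the paper: rewrite the canonical sum in occupation numbers with the Pauli constraint forcing all occupied levels to have $a_i=1$, identify the resulting sum over strictly increasing index tuples with $m_{(1^N)}\left(e^{-\beta\varepsilon_1},e^{-\beta\varepsilon_2},\dots\right)$ via Eq.~(\ref{mdc}), and then invoke $m_{(1^N)}=\left(1^N\right)$ to obtain Eq.~(\ref{2222}). The only difference is cosmetic: where the paper simply cites the identity $m_{(1^N)}=\left(\lambda\right)_{P(N)}$ from the literature, you justify it by the unitriangularity of the Kostka matrix and the minimality of $(1^N)$ in dominance order, which is a correct and slightly more self-contained way of stating the same standard fact.
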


\begin{proof}
Rewrite Eq. (\ref{zzdy}) in terms of occupation numbers. The difference
between Fermi and Bose systems is that the maximum occupation number should be
$1$, i.e., $a_{i}=a_{j}=...a_{k}=1$. Eq. (\ref{zzdy}) then can be re-expressed
as \cite{schrodinger1989statistical}%
\begin{equation}
Z_{FD}\left(  \beta,N\right)  =\sum_{s}e^{-\beta E_{s}}=\sum_{perm}%
e^{-\beta\varepsilon_{i_{1}}-\beta\varepsilon_{i_{2}}.....-\beta
\varepsilon_{i_{N}}}, \label{fdzzdy}%
\end{equation}
where the sum $\sum_{perm}$ runs over all possible mononical increasing
permutation of $\varepsilon_{i}$. The constraint on the total number of
particles of the system now becomes $a_{1}=a_{2}=...=a_{N}=1$.

Equalling the occupation number $a_{i}$ and the element $\lambda_{i}$,
equalling the variable $x_{i}$ and $e^{-\beta\varepsilon_{i}}$, and comparing
the canonical partition function, Eq. (\ref{fdzzdy}), with the definition of
the symmetric function $m_{\left(  \lambda\right)  }\left(  x_{1}%
,x_{2},...\right)  $, Eq. (\ref{mdc}), give
\begin{equation}
Z_{FD}\left(  \beta,N\right)  =m_{\left(  \lambda\right)  _{P\left(  N\right)
}}\left(  e^{-\beta\varepsilon_{1}},e^{-\beta\varepsilon_{2}},...\right)
=m_{\left(  1^{N}\right)  }\left(  e^{-\beta\varepsilon_{1}},e^{-\beta
\varepsilon_{2}},...\right)  , \label{fd mid}%
\end{equation}
where $\left(  \lambda\right)  _{P\left(  N\right)  }=\left(  1^{N}\right)  $
is used. Then we have%
\begin{equation}
Z_{FD}\left(  \beta,N\right)  =\left(  1^{N}\right)  \left(  e^{-\beta
\varepsilon_{1}},e^{-\beta\varepsilon_{2}},...\right)  ,
\end{equation}
where the relation $m_{\left(  1^{N}\right)  }\left(  e^{-\beta\varepsilon
_{1}},e^{-\beta\varepsilon_{2}},...\right)  =\left(  \lambda\right)
_{P\left(  N\right)  }\left(  e^{-\beta\varepsilon_{1}},e^{-\beta
\varepsilon_{2}},...\right)  $ is used \cite{Macdonald1998Symmetric}.
\end{proof}

Similarly, the $S$-function $\left(  1^{N}\right)  \left(  e^{-\beta
\varepsilon_{1}},e^{-\beta\varepsilon_{2}},...\right)  $ can be represented as
the determinant of a certain matrix
\cite{Littlewood2005The,Macdonald1998Symmetric}:
\begin{equation}
\left(  N\right)  \left(  x_{1},x_{2},...\right)  =\frac{1}{N!}\det\left(
\begin{array}
[c]{ccccc}%
\sum_{i}x_{i} & 1 & 0 & ... & 0\\
\sum_{i}x_{i}^{2} & \sum_{i}x_{i} & 2 & ... & 0\\
\sum_{i}x_{i}^{3} & \sum_{i}x_{i}^{2} & \sum_{i}x_{i} & ... & ...\\
... & ... & ... & ... & \left(  N-1\right) \\
\sum_{i}x_{i}^{N} & \sum_{i}x_{i}^{(N-1)} & \sum_{i}x_{i}^{(N-2)} & ... &
\sum_{i}x_{i}%
\end{array}
\right)  . \label{N123}%
\end{equation}
The canonical partition function (\ref{2222}), by Eq. (\ref{N123}), can be
equivalently expressed as
\begin{equation}
Z_{FD}\left(  \beta,N\right)  =\left(  1^{N}\right)  \left(  e^{-\beta
\varepsilon_{1}},e^{-\beta\varepsilon_{2}},...\right)  =\frac{1}{N!}%
\det\left(
\begin{array}
[c]{ccccc}%
Z\left(  \beta\right)  & 1 & 0 & ... & 0\\
Z\left(  2\beta\right)  & Z\left(  \beta\right)  & 2 & ... & 0\\
Z\left(  3\beta\right)  & Z\left(  2\beta\right)  & Z\left(  \beta\right)  &
... & ...\\
... & ... & ... & ... & \left(  N-1\right) \\
Z\left(  N\beta\right)  & Z\left(  N\beta-\beta\right)  & Z\left(
N\beta-2\beta\right)  & ... & Z\left(  \beta\right)
\end{array}
\right)  . \label{fm}%
\end{equation}

\section{The canonical partition function of ideal Gentile gases \label{cGg}}

The Gentile statistic is a generalization of Bose and Fermi statistics
\cite{gentile1940itosservazioni,dai2004gentile}. The maximum occupation number
of a Fermi system is $1$ and of a Bose system is $\infty$. As a
generalization, the maximum occupation number of a Gentile system is an
arbitrary integer $q$
\cite{gentile1940itosservazioni,dai2004gentile,maslov2017relationship}. Beyond
commutative and anticommutative quantization, which corresponds to the Bose
case and the Fermi case respectively, there are also some other effective
quantization schemes \cite{dai2004representation,shen2007intermediate}. It is
shown that the statistical distribution corresponding to various
$q$-deformation schemes are in fact various Gentile distributions with
different maximum occupation numbers $q$ \cite{dai2012calculating}. There are
many physical systems obey intermediate statistics, for example, spin waves,
or, magnons, which is the elementary excitation of the Heisenberg magnetic
system \cite{dai2009intermediate}, deformed fermion gases
\cite{algin2017effective,algin2015anyonic}, and deformed boson gases
\cite{algin2017bose}. Moreover, there are also generalizations for Gentile
statistics, in which the maximum occupation numbers of different quantum
states take on different values \cite{dai2009exactly}.

In this section, we present the canonical partition function of ideal Gentile gases.

\begin{theorem}
For ideal Gentile gases with a maximum occupation number $q$, the canonical
partition function is
\begin{equation}
Z_{q}\left(  \beta,N\right)  =\sum_{I=1}^{P\left(  N\right)  \ }Q^{J}\left(
q\right)  \left(  \lambda\right)  _{I}\left(  e^{-\beta\varepsilon_{1}%
},e^{-\beta\varepsilon_{2}},...\right)  , \label{3333}%
\end{equation}
where the coefficient
\begin{equation}
Q^{J}\left(  q\right)  =\sum_{K=1}^{P\left(  N\right)  }\left(  k_{K}%
^{I}\right)  ^{-1}\Gamma^{K}\left(  q\right)  \label{Q}%
\end{equation}
with $\Gamma^{K}\left(  q\right)  $ satisfying%
\begin{align}
\Gamma^{K}\left(  q\right)   &  =0\text{, \ when }\lambda_{K,1}>q,\nonumber\\
\Gamma^{K}\left(  q\right)   &  =1\text{, \ when }\lambda_{K,1}\leq q,
\label{GAMMA}%
\end{align}
and $\left(  k_{K}^{I}\right)  ^{-1}$ satisfying%
\begin{equation}
\sum_{I=1}^{p\left(  k\right)  }\left(  k_{K}^{I}\right)  ^{-1}k_{I}%
^{L}=\delta_{K}^{L} \label{KOSTKA}%
\end{equation}
with $k_{I}^{L}$ the Kostka number \cite{Macdonald1998Symmetric}.
\end{theorem}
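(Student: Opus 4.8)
The plan is to follow the same route used for the Bose and Fermi cases: rewrite the sum over states in Eq.~(\ref{zzdy}) in terms of occupation numbers, identify the result as a sum of the symmetric functions $m_{\left(\lambda\right)}$, and then re-express everything in the $S$-function basis. The only structural novelty of the Gentile case is the occupation constraint. Collecting the occupied single-particle states by weakly increasing energy and writing $a_{j}$ for the occupation of the $j$th such state, the Gentile rule bounds each occupation number by $q$, so the admissible configurations are exactly those satisfying $\sum_{i}a_{i}=N$ with $1\leq a_{i}\leq q$. As in the Bose proof, the lower bound $a_{i}\geq1$ reflects that only occupied states are reckoned in.

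First I would set $x_{i}=e^{-\beta\varepsilon_{i}}$ and identify each admissible occupation sequence $\left\{a_{i}\right\}$ with an integer partition $\left(\lambda\right)_{I}$ of $N$. Comparing with the definition (\ref{mdc}) of $m_{\left(\lambda\right)}$, each such partition contributes the term $m_{\left(\lambda\right)_{I}}\left(e^{-\beta\varepsilon_{1}},e^{-\beta\varepsilon_{2}},...\right)$. The bound $a_{i}\leq q$ translates into the requirement that the largest part satisfy $\lambda_{I,1}\leq q$, since the parts are listed in descending order. Encoding this selection by the indicator $\Gamma^{I}\left(q\right)$ of Eq.~(\ref{GAMMA}) gives
\[
Z_{q}\left(\beta,N\right)=\sum_{I=1}^{P\left(N\right)}\Gamma^{I}\left(q\right)m_{\left(\lambda\right)_{I}}\left(e^{-\beta\varepsilon_{1}},e^{-\beta\varepsilon_{2}},...\right).
\]
This already reduces correctly in the two limiting cases: $q=1$ retains only $\left(\lambda\right)_{P\left(N\right)}=\left(1^{N}\right)$ and reproduces the Fermi result, while $q\rightarrow\infty$ retains every partition and reproduces the Bose result.

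The remaining step is to pass from the $m$-basis to the $S$-function basis. The relation (\ref{smgx}) expresses each $S$-function as a Kostka-weighted sum of the $m_{\left(\lambda\right)}$; since the Kostka matrix is unitriangular with respect to the dominance order, it is invertible, and I would use the inverse matrix $\left(k_{K}^{I}\right)^{-1}$ fixed by Eq.~(\ref{KOSTKA}) to write each $m_{\left(\lambda\right)_{I}}$ as a linear combination of the $S$-functions $\left(\lambda\right)_{K}$. Substituting this inversion into the expansion above and interchanging the order of summation collects, in front of each $S$-function, precisely the coefficient defined in Eq.~(\ref{Q}), which yields Eq.~(\ref{3333}).

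The one genuine subtlety, rather than a real obstacle, is the bookkeeping of the two nested inverse relations. One must keep in mind that $\left(k_{K}^{I}\right)^{-1}$ denotes the entry of the \emph{inverse} Kostka matrix, not the reciprocal of a single Kostka number, and that the index contraction in Eq.~(\ref{KOSTKA}) is arranged so that substituting the inversion of (\ref{smgx}) and relabelling the summation indices reproduces the stated coefficient. Invertibility is guaranteed by the unitriangular structure of the Kostka matrix, so no existence or convergence issues arise, and the infinite number of variables $x_{i}$ is handled exactly as in the Bose and Fermi proofs.
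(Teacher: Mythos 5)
Your proposal is correct and takes essentially the same route as the paper's own proof: rewriting the canonical sum over states in occupation numbers, encoding the Gentile constraint $1\leq a_{i}\leq q$ as the indicator $\Gamma^{K}\left(q\right)$ on the largest part of each partition to obtain $Z_{q}=\sum_{K}\Gamma^{K}\left(q\right)m_{\left(\lambda\right)_{K}}$, and then inverting the Kostka relation, Eq.~(\ref{smgx}), via $\left(k_{K}^{I}\right)^{-1}$ to pass to the $S$-function basis with the coefficient of Eq.~(\ref{Q}). Your additional remarks --- the consistency checks at $q=1$ (Fermi) and $q\rightarrow\infty$ (Bose), and the unitriangularity of the Kostka matrix guaranteeing invertibility --- are sound clarifications that the paper leaves implicit.
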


\begin{proof}
Rewrite Eq. (\ref{zzdy}) in terms of the occupation number. Here, the
occupation number $a_{j}$ is restricted by $0<a_{j}\leq q$. Eq. (\ref{zzdy})
can be re-expressed as%
\begin{equation}
Z_{q}\left(  \beta,N\right)  =\sum_{s}e^{-\beta E_{s}}=\sum_{\left\{
a_{i}\right\}  _{q}}\sum_{perm}e^{-\beta\varepsilon_{i_{1}}a_{1}%
-\beta\varepsilon_{i_{2}}a_{2}.....}, \label{fmzzdy1}%
\end{equation}
where the sum $\sum_{perm}$ runs over all possible mononical increasing
permutation of $\varepsilon_{i}$, and the sum $\sum_{\left\{  a_{i}\right\}
_{q}}$ runs over all possible occupation number restricted by the constraints
\begin{equation}
\sum a_{i}=N,\text{ \ \ }1\leq a_{j}\leq q.
\end{equation}

Equalling the occupation number $a_{i}$ and the element $\lambda_{i}$,
equalling the variable $x_{i}$ and $e^{-\beta\varepsilon_{i}}$, and comparing
the canonical partition function, Eq. (\ref{fmzzdy1}), with the definition of
the symmetric function $m_{\left(  \lambda\right)  }\left(  x_{1}%
,x_{2},...\right)  $, Eq. (\ref{mdc}), give%
\begin{equation}
Z_{q}\left(  \beta,N\right)  =\sum_{K=1}^{P\left(  N\right)  \ }\Gamma
^{K}\left(  q\right)  m_{\left(  \lambda\right)  _{K}}\left(  e^{-\beta
\varepsilon_{1}},e^{-\beta\varepsilon_{2}},...\right)  . \label{jtezz}%
\end{equation}
By introducing $\Gamma^{K}\left(  q\right)  $, the constraint on $a_{j}$,
i.e., $1\leq a_{j}\leq q$, is automatically taken into account.

Introducing $\left(  k_{K}^{I}\right)  ^{-1}$, which satisfies Eq.
(\ref{KOSTKA}), multiplying $\left(  k_{K}^{I}\right)  ^{-1}$ to both sides of
Eq. (\ref{smgx}), and summing over the indice $I$, we arrive at%
\begin{equation}
m_{\left(  \lambda\right)  _{K}}\left(  e^{-\beta\varepsilon_{1}}%
,e^{-\beta\varepsilon_{2}},...\right)  =\sum_{I=1}^{P\left(  N\right)
}\left(  k_{K}^{I}\right)  ^{-1}\left(  \lambda\right)  _{I}\left(
e^{-\beta\varepsilon_{1}},e^{-\beta\varepsilon_{2}},...\right)  . \label{gxsm}%
\end{equation}
Substituting Eq. (\ref{gxsm}) into Eq. (\ref{jtezz}) gives
\begin{equation}
Z_{q}\left(  \beta,N\right)  =\sum_{I=1}^{P\left(  N\right)  }\sum
_{K=1}^{P\left(  N\right)  \ }\Gamma^{K}\left(  q\right)  \left(  k_{K}%
^{I}\right)  ^{-1}\left(  \lambda\right)  _{I}\left(  e^{-\beta\varepsilon
_{1}},e^{-\beta\varepsilon_{2}},...\right)  . \label{22222}%
\end{equation}
Substituting $Q^{J}\left(  q\right)  $ which is defined by Eq. (\ref{Q}) into
Eq. (\ref{22222}) gives Eq. (\ref{3333}) directly.
\end{proof}

In appendix \ref{cpf}, as examples, we calculate the canonical partition
function of a Gentile gas based on Eq. (\ref{3333}) for $N=3$, $4$, $5$, and
$6$.

\section{The canonical partition function of interacting classical gases
\label{classical}}

In this section, we show that the canonical partition function of an
interacting classical gas given by the classical cluster expansion is indeed a
Bell polynomial \cite{Andrews1976The,Littlewood2005The,Macdonald1998Symmetric}.

\begin{theorem}
The canonical partition function of an interacting classical gas with $N$
particles is
\begin{equation}
Z\left(  \beta,N\right)  =\frac{1}{N!}B_{N}\left(  \Gamma_{1},\Gamma
_{2},\Gamma_{3},...,\Gamma_{N}\right)  , \label{989898}%
\end{equation}
where $B_{N}\left(  x_{1},x_{2},...,x_{N}\right)  $ is the Bell polynomial
\cite{wheeler1987bell,Collins2001The} and $\Gamma_{l}$ is defined as%
\begin{equation}
\Gamma_{l}=\frac{l!V}{\lambda^{3}}b_{l} \label{Gammal}%
\end{equation}
with $b_{l}$ the expansion coefficient in the classical cluster expansion
\cite{Pathria1996Statistical}.\textbf{ }
\end{theorem}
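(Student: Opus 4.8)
The plan is to connect the classical cluster expansion with the exponential generating function that \emph{defines} the complete Bell polynomial, and then simply read off the coefficient of a power of the fugacity. The whole statement is essentially a restatement of the exponential formula, so the work is in matching conventions rather than in any analytic estimate.

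First I would recall the standard output of the classical cluster expansion \cite{Pathria1996Statistical}: the grand canonical partition function of an interacting classical gas factorizes as
\begin{equation}
\mathcal{Z}(z,V,T)=\exp\left(\frac{V}{\lambda^{3}}\sum_{l=1}^{\infty}b_{l}z^{l}\right),
\end{equation}
where $z$ is the fugacity and $b_{l}$ is the cluster integral. On the other hand, the grand canonical partition function is by definition the fugacity generating function of the canonical partition functions,
\begin{equation}
\mathcal{Z}(z,V,T)=\sum_{N=0}^{\infty}z^{N}Z(\beta,N),
\end{equation}
with $Z(\beta,0)=1$. Equating these two expressions reduces the problem to extracting $Z(\beta,N)$ as the coefficient of $z^{N}$ in the exponential of a power series.

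Second, I would invoke the defining exponential generating function of the complete Bell polynomial \cite{wheeler1987bell,Collins2001The},
\begin{equation}
\exp\left(\sum_{l=1}^{\infty}x_{l}\frac{t^{l}}{l!}\right)=\sum_{N=0}^{\infty}B_{N}(x_{1},x_{2},\ldots,x_{N})\frac{t^{N}}{N!},
\end{equation}
and match its argument with the cluster sum. Setting $t=z$, I would rewrite the exponent of the cluster expansion so that its $z^{l}$ coefficient carries an explicit $1/l!$: choosing $x_{l}=\Gamma_{l}$ with $\Gamma_{l}=\frac{l!\,V}{\lambda^{3}}b_{l}$ gives $x_{l}z^{l}/l!=(V/\lambda^{3})b_{l}z^{l}$, so the two exponents coincide term by term. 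With this identification the two generating functions are literally the same series, hence
\begin{equation}
\sum_{N=0}^{\infty}z^{N}Z(\beta,N)=\sum_{N=0}^{\infty}B_{N}(\Gamma_{1},\ldots,\Gamma_{N})\frac{z^{N}}{N!},
\end{equation}
and comparing the coefficients of $z^{N}$ yields $Z(\beta,N)=\frac{1}{N!}B_{N}(\Gamma_{1},\ldots,\Gamma_{N})$, which is the claim.

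I expect the only genuine point of care to be the normalization bookkeeping: the cluster integrals $b_{l}$ are defined \emph{without} the $1/l!$ weight that the Bell-polynomial generating function requires, and the factor $l!$ in the definition of $\Gamma_{l}$ is precisely what reconciles the two conventions. No hard analytic step arises, since the whole argument is a formal power-series identity in $z$; questions of convergence can be handled at the level of formal series, as is standard for the cluster expansion.
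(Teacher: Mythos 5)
Your proof is correct, but it takes a genuinely different route from the paper's. The paper stays entirely within the canonical ensemble: it quotes the canonical cluster-expansion formula $Z\left(\beta,N\right)=\sum_{\left\{m_{l}\right\}}\prod_{l=1}^{N}\left(\frac{1}{l!}\frac{l!b_{l}V}{\lambda^{3}}\right)^{m_{l}}\frac{1}{m_{l}!}$, Eq. (\ref{444444}), and matches it term by term against the partition-sum (Fa\`a di Bruno) form of the complete Bell polynomial, Eq. (\ref{bell}), where in both sums the multi-index $\left\{m_{l}\right\}$ is constrained by $\sum_{l}lm_{l}=N$; the identification $x_{l}=\Gamma_{l}$ is then immediate. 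You instead pass through the grand canonical ensemble, equating the exponential form $\Xi=\exp\left(\frac{V}{\lambda^{3}}\sum_{l}b_{l}z^{l}\right)$ with the fugacity generating function $\sum_{N}z^{N}Z\left(\beta,N\right)$ and invoking the exponential generating function that defines $B_{N}$. This is mathematically equivalent (the two characterizations of $B_{N}$ are related by exactly the formal expansion of the exponential, which also automatically encodes the constraint $\sum_{l}lm_{l}=N$ that the paper's route must track explicitly), and your remark about the $l!$ bookkeeping is exactly the right point of care. It is worth knowing, however, that your derivation is essentially the one in Ref. \cite{fronczak2013cluster}, which the paper cites and explicitly distinguishes from its own: the authors emphasize that they obtain the canonical partition function \emph{without} detouring through the grand partition function, in keeping with the paper's theme that the canonical ensemble is the rigorous setting for fixed $N$. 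What your route buys is independence from the explicit constrained-sum form of $Z\left(\beta,N\right)$ (you only need the standard exponentiated grand canonical result plus the Bell EGF); what the paper's route buys is a one-line proof that never leaves the canonical ensemble and so supports its conceptual claim.
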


\begin{proof}
The canonical partition function given by the classical cluster expansion is
\cite{Pathria1996Statistical}%
\begin{equation}
Z\left(  \beta,N\right)  =\sum_{\left\{  m_{l}\right\}  }\left[
%TCIMACRO{\dprod \limits_{l=1}^{N}}%
%BeginExpansion
{\displaystyle\prod\limits_{l=1}^{N}}
%EndExpansion
\left(  \frac{1}{l!}\frac{l!b_{l}V}{\lambda^{3}}\right)  ^{m_{l}}\frac
{1}{m_{l}!}\right]  . \label{444444}%
\end{equation}
Comparing Eq. (\ref{444444}) with expression of\ the Bell polynomial
\cite{wheeler1987bell,Collins2001The}%
\begin{equation}
B_{N}\left(  x_{1},x_{2},...,x_{N}\right)  =N!\sum_{\left\{  m_{l}\right\}
}\left[
%TCIMACRO{\dprod \limits_{l=1}^{N}}%
%BeginExpansion
{\displaystyle\prod\limits_{l=1}^{N}}
%EndExpansion
\left(  \frac{1}{l!}x_{l}\right)  ^{m_{l}}\frac{1}{m_{l}!}\right]  ,
\label{bell}%
\end{equation}
we immediately arrive at Eq. (\ref{989898}).
\end{proof}

Here, for completeness, we list some $\Gamma_{l}$ in the classical cluster
expansion. For an interacting classical gas with the Hamiltonian $H_{N}%
=-\sum_{l=1}^{N}\frac{P_{l}^{2}}{2m}+\sum_{j<l}U_{jl}$, $\Gamma_{l}$ is given
as \cite{Pathria1996Statistical,ReichlA}%

\begin{align}
\Gamma_{1}  &  =\frac{V}{\lambda^{3}},\nonumber\\
\text{ }\Gamma_{2}  &  =\frac{1}{\lambda^{6}}\int d^{3}q_{1}\int d^{3}%
q_{2}\left(  e^{-\beta U_{12}}-1\right)  ,\cdots.
\end{align}

\section{The canonical partition function of interacting quantum gases
\label{interacting}}

The interacting quantum gas is always an important issue in statistical
mechanics. In a quantum hard-sphere gas, there are two interplayed effects:
quantum exchange interactions and classical inter-particle interactions
\cite{dai2005hard,dai2007interacting,dai2017explicit}. In this section, we
show that the canonical partition function can be represented by the Bell polynomial.

\begin{theorem}
The canonical partition function of an interacting quantum gas with $N$
particles is%
\begin{equation}
Z\left(  \beta,N\right)  =\frac{1}{N!}B_{N}\left(  \Gamma_{1},\Gamma
_{2},\Gamma_{3},...,\Gamma_{N}\right)  , \label{989899}%
\end{equation}
where $B_{N}\left(  x_{1},x_{2},...,x_{N}\right)  $ is the Bell polynomial
\cite{wheeler1987bell,Collins2001The} and $\Gamma_{l}$ is defined as%
\begin{equation}
\Gamma_{l}=\frac{l!V}{\lambda^{3}}b_{l}%
\end{equation}
with $b_{l}$ the expansion coefficient in the quantum cluster expansion
\cite{Pathria1996Statistical,ReichlA}.
\end{theorem}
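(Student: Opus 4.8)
The plan is to mirror the proof of the interacting classical gas essentially verbatim, because the only distinction between the two cases lies in the physical content of the cluster coefficients $b_l$, not in the combinatorial structure of the expansion. First I would invoke the quantum cluster expansion for the canonical partition function, which — as established in the standard references \cite{Pathria1996Statistical,ReichlA} — takes precisely the same product-over-partitions form as the classical expansion,
\begin{equation}
Z\left(  \beta,N\right)  =\sum_{\left\{  m_{l}\right\}  }\left[
{\displaystyle\prod\limits_{l=1}^{N}}
\left(  \frac{1}{l!}\frac{l!b_{l}V}{\lambda^{3}}\right)  ^{m_{l}}\frac{1}{m_{l}!}\right]  ,
\end{equation}
where the sum runs over all sets $\left\{  m_{l}\right\}  $ of non-negative integers subject to the constraint $\sum_{l}l\,m_{l}=N$, and $b_{l}$ is now the \emph{quantum} cluster integral that encodes both the classical inter-particle interactions and the quantum exchange interactions.

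Next I would substitute the definition $\Gamma_{l}=\frac{l!V}{\lambda^{3}}b_{l}$, so that each factor becomes $\left(  \frac{1}{l!}\Gamma_{l}\right)  ^{m_{l}}$, and compare the resulting expression directly with the combinatorial definition of the Bell polynomial already recorded in Eq. (\ref{bell}),
\begin{equation}
B_{N}\left(  x_{1},x_{2},...,x_{N}\right)  =N!\sum_{\left\{  m_{l}\right\}  }\left[
{\displaystyle\prod\limits_{l=1}^{N}}
\left(  \frac{1}{l!}x_{l}\right)  ^{m_{l}}\frac{1}{m_{l}!}\right]  .
\end{equation}
Identifying $x_{l}$ with $\Gamma_{l}$ and matching the summands term by term immediately yields $Z\left(  \beta,N\right)  =\frac{1}{N!}B_{N}\left(  \Gamma_{1},\ldots,\Gamma_{N}\right)  $, which is exactly Eq. (\ref{989899}). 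This algebraic matching is identical to the one carried out in the classical theorem and presents no difficulty of its own.

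The only genuine content — and hence the step I would treat most carefully — is the first one: justifying that the quantum cluster expansion reproduces the identical product-over-partitions structure as the classical case. This is not obvious a priori, since the quantum calculation begins from a symmetrized (Bose) or antisymmetrized (Fermi) $N$-body density matrix rather than from a classical configuration integral, and the combinatorics of permutation cycles enters explicitly. The resolution, which I would take from \cite{Pathria1996Statistical,ReichlA}, is that the quantum cluster integrals $b_{l}$ are \emph{defined} precisely so as to reorganize the trace over the symmetrized density matrix into the same sum-over-connected-clusters form, absorbing all the permutation combinatorics into the coefficients $b_{l}$ themselves. Once this structural fact is granted, the remainder of the argument is the purely formal comparison with the Bell polynomial above, so I expect no further obstacle.
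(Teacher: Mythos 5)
Your proposal is correct and follows essentially the same route as the paper: the paper likewise quotes the quantum cluster expansion from the standard references in the identical product-over-partitions form (its Eq.~(\ref{885})) and concludes by direct comparison with the Bell polynomial definition (\ref{bell}), exactly as in the classical-gas theorem. Your added remarks making the constraint $\sum_{l} l\,m_{l}=N$ explicit and noting that all permutation-cycle combinatorics is absorbed into the quantum $b_{l}$ are sound, but they only elaborate what the paper delegates to \cite{Pathria1996Statistical,ReichlA}.
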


Here, for completeness, we list some first expansion coefficients in the
quantum cluster expansion. For an interacting quantum gas with the Hamiltonian
$H_{i}=-\frac{\hbar^{2}}{2m}\sum_{l=1}^{i}\nabla_{l}^{2}+\sum_{j<l}U_{jl}$,
the quantum cluster expansion coefficient is given by
\cite{Pathria1996Statistical,ReichlA}\textbf{ }%
\begin{equation}
b_{l}=\frac{1}{l!V\lambda^{3\left(  l-1\right)  }}\int d^{3l}q\langle
q_{1},q_{2},q_{3},...q_{l}|U_{l}|q_{1},q_{2},q_{3},...q_{l}\rangle,
\label{bquantum}%
\end{equation}
with $\langle q_{1}^{\prime}|U_{1}|q_{1}\rangle=\lambda^{3}\langle
q_{1}^{\prime}|e^{-\beta H_{1}}|q_{1}\rangle$ and $\langle q_{1}^{\prime
},q_{2}^{\prime}|U_{2}|q_{1},q_{2}\rangle$=$\lambda^{6}\left[  \langle
q_{1}^{\prime},q_{2}^{\prime}|e^{-\beta H_{2}}|q_{1},q_{2}\rangle\right.
-\left.  \langle q_{1}^{\prime}|e^{-\beta H_{1}}|q_{1}\rangle\langle
q_{2}^{\prime}|e^{-\beta H_{1}}|q_{2}\rangle\right]  $, \ldots. The quantum
exchange interaction is reflected in the symmetric property of the state
vector $|q_{1},q_{2},q_{3},...\rangle$, symmetric for bosons and antisymmetric
for fermions.

\begin{proof}
The canonical partition function given by the quantum cluster expansion is
\cite{Pathria1996Statistical}%
\begin{equation}
Z\left(  \beta,N\right)  =\sum_{\left\{  m_{l}\right\}  }\left[
%TCIMACRO{\dprod \limits_{l=1}^{N}}%
%BeginExpansion
{\displaystyle\prod\limits_{l=1}^{N}}
%EndExpansion
\left(  \frac{1}{l!}\frac{l!b_{l}V}{\lambda^{d}}\right)  ^{m_{l}}\frac
{1}{m_{l}!}\right]  . \label{885}%
\end{equation}
Comparing Eq. (\ref{885}) with the expression of the Bell polynomial defined
by Eq. (\ref{bell}) give\ Eq. (\ref{989899}) directly.
\end{proof}

Comparing Eqs. (\ref{989898}) and (\ref{989899}), we can see that the only
difference between the canonical partition functions of an interacting
classical gas and of an interacting quantum gas is the different between the
expansion coefficients.

The above result agrees with the canonical partition function given in Ref.
\cite{fronczak2013cluster}. In Ref. \cite{fronczak2013cluster}, the canonical
partition function is obtained by starting with an expansion of the grand
canonical partition function $\Xi\left(  z,\beta\right)  =\exp\left(
-\beta\Phi\left(  z,\beta\right)  \right)  =\exp\left(  -\beta%
%TCIMACRO{\dsum \limits_{m=1}^{\infty}}%
%BeginExpansion
{\displaystyle\sum\limits_{m=1}^{\infty}}
%EndExpansion
\frac{z^{m}}{m!}\phi_{m}\left(  \beta\right)  \right)  =1+\sum_{N=1}^{\infty
}\frac{z^{N}}{N!}B_{N}\left(  w_{1},...,w_{N}\right)  $ with $w_{l}=-\beta
\phi_{l}\left(  \beta\right)  =-\beta\left.  \frac{\partial^{l}\Phi}{\partial
z^{l}}\right\vert _{z=0}$ and $\Phi\left(  z,\beta\right)  $ the grand
potential, and then comparing with the relation between the grand canonical
partition function and the canonical partition function, $\Xi\left(
z,\beta\right)  =\sum_{N=0}^{\infty}$ $Z\left(  \beta,N\right)  z^{N}$
\cite{Pathria1996Statistical}. In Ref. \cite{fronczak2013cluster}, the
canonical partition function is obtained from a grand partition function. In
the present paper, the canonical partition function is calculated in the
canonical ensemble. Concretely, in this paper, the canonical partition
function for interacting classical and quantum gases is obtained by comparing
the canonical partition function of interacting classical and quantum gases
given by the cluster expansion method \cite{Pathria1996Statistical} and the
definition of the Bell polynomial \cite{wheeler1987bell}. This allows us to
give the parameter $\Gamma_{l}$ in the canonical partition function explicitly.

\section{The virial coefficients of ideal quantum gases in the canonical
ensemble \label{virial}}

Based on the canonical partition function of ideal quantum gases given in
sections \ref{cBg}, \ref{cFg}, and \ref{cGg}, we calculate the virial
coefficient for ideal quantum gases and then compare them with those
calculated in the grand canonical ensemble.

\subsection{The virial coefficients of ideal Bose and Fermi gases in the
canonical ensemble}

In this section we calculate the virial coefficients of ideal Bose and Fermi
gases directly from the canonical partition functions.\textbf{ }

In terms of the $S$-function, the canonical partition functions of ideal Bose
and Fermi gases can be expressed by the partition function of a classical free
particle. Substituting the partition function of a classical free particle
(\ref{single3}) into Eqs. (\ref{bs}) and (\ref{fm}) and re-expressing them in
terms of the $V/\lambda^{3}$ give the canonical partition function of ideal
Bose gases
\begin{equation}
Z_{BE}\left(  \beta,N\right)  =\frac{1}{N!}\left(  \frac{V}{\lambda^{3}%
}\right)  ^{N}+\frac{1}{2^{3/2}}\frac{1}{2\left(  N-2\right)  !}\left(
\frac{V}{\lambda^{3}}\right)  ^{N-1}+\cdots\label{bbs}%
\end{equation}
and the canonical partition function of ideal Fermi gases
\begin{equation}
Z_{FD}\left(  \beta,N\right)  =\frac{1}{N!}\left(  \frac{V}{\lambda^{3}%
}\right)  ^{N}-\frac{1}{2^{3/2}}\frac{1}{2\left(  N-2\right)  !}\left(
\frac{V}{\lambda^{3}}\right)  ^{N-1}+\cdots. \label{ffm}%
\end{equation}

In order to calculate the virial coefficients, we first calculate the equation
of state from the canonical partition function. The equation of state is
\cite{Pathria1996Statistical}
\begin{equation}
P=\frac{1}{\beta}\frac{\partial\ln Z\left(  N,\beta\right)  }{\partial V}.
\label{xtfc}%
\end{equation}
The equation of state \textbf{(}\ref{xtfc}) can be expressed by the virial
expansion \cite{Pathria1996Statistical}:%
\begin{equation}
\frac{PV}{NkT}=%
%TCIMACRO{\dsum \limits_{n}}%
%BeginExpansion
{\displaystyle\sum\limits_{n}}
%EndExpansion
a_{n}\left(  \frac{N\lambda^{3}}{V}\right)  ^{n-1} \label{zt}%
\end{equation}
with $a_{l}$ the virial coefficient \cite{Pathria1996Statistical}.

Substituting Eqs. (\ref{bbs}) and (\ref{ffm}) into Eq. (\ref{xtfc}), the
coefficients of $\left(  N\lambda^{3}/V\right)  ^{2}$ give the second virial
coefficients of ideal Bose and Fermi gases, respectively:
\begin{equation}
a_{2}=\pm\frac{1}{2^{5/2}}\left(  1-\frac{1}{N}\right)  , \label{bsfm}%
\end{equation}
where "$+$" stands for Bose gases and "$-$" stands for Fermi gases.

Comparing the second virial coefficient obtained in the canonical ensemble,
Eq. (\ref{bsfm}) with the second virial coefficient obtained in the grand
canonical ensemble \cite{Pathria1996Statistical},%
\begin{equation}
a_{2}^{grand}=\pm\frac{1}{2^{5/2}},
\end{equation}
we can see that the second virial coefficient in the canonical partition
function depends on the total particle number $N$. As $N$ tends to infinity,
$a_{2}$ recovers the second virial coefficient obtained in the grand canonical
ensemble $a_{2}^{grand}$.

\subsection{The virial coefficients of Gentile gases in the canonical
ensemble}

For Gentile gases, the maximum occupation number is an given integer $q$. In
this section, we list some virial coefficients of Gentle gases. The details of
the calculation will be given in the Appendix \ref{QIq}.

In the following, we compare the results obtained in canonical ensembles and
in grand canonical ensembles in Tables (\ref{table1})-(\ref{table5}),
respectively. The virial coefficient of Gentile gases in the grand canonical
ensemble can be found in Ref. \cite{khare2005fractional}.

\begin{table}[ptb]
\caption{The virial coefficients of ideal Gentile gases: $q=2$}%
\label{table1}
\centering
\begin{tabular}
[c]{cccccc}\hline
& $N=3$ & $N=4$ & $N=5$ & $N=6$ & \text{grand canonical}\\\hline
$a_{1}$ & $1$ & $1$ & $1$ & $1$ & $1$\\
$a_{2}$ & $-0.11785$ & $-0.13258$ & $-0.14142$ & $-0.14731$ & $-0.17677$\\
$a_{3}$ & $0.09869$ & $0.15482$ & $0.19317$ & $0.22068$ & $0.386$\\
$a_{4}$ & $-0.04497$ & $-0.11330$ & $-0.17447$ & $-0.22482$ & $-0.6124$%
\\\hline
\end{tabular}
\end{table}

\begin{table}[ptb]
\caption{The virial coefficients of ideal Gentile gases: $q=3$}%
\label{table2}
\centering
\begin{tabular}
[c]{cccccc}\hline
& $N=3$ & $N=4$ & $N=5$ & $N=6$ & \text{grand canonical}\\\hline
$a_{1}$ & $1$ & $1$ & $1$ & $1$ & $1$\\
$a_{2}$ & $-0.11785$ & $-0.13258$ & $-0.14142$ & $-0.14731$ & $-0.17677$\\
$a_{3}$ & $0.01316$ & $0.01048$ & $0.00842$ & $0.00685$ & $-0.0033$\\
$a_{4}$ & $0.00039$ & $0.03667$ & $0.07390$ & $0.10616$ & $0.374889$\\\hline
\end{tabular}
\end{table}

\begin{table}[ptb]
\caption{The virial coefficients of ideal Gentile gases: $q=4$}%
\label{table3}
\centering
\begin{tabular}
[c]{ccccc}\hline
& $N=4$ & $N=5$ & $N=6$ & \text{grand canonical}\\\hline
$a_{1}$ & $1$ & $1$ & $1$ & $1$\\
$a_{2}$ & $-0.13258$ & $-0.14142$ & $-0.14731$ & $-0.17677$\\
$a_{3}$ & $0.01048$ & $0.00842$ & $0.00685$ & $-0.0033$\\
$a_{4}$ & $0.00152$ & $0.00190$ & $0.00199$ & $-0.00011$\\\hline
\end{tabular}
\end{table}

\begin{table}[ptb]
\caption{The virial coefficients of ideal Gentile gases: $q=5$}%
\label{table4}
\centering
\begin{tabular}
[c]{cccc}\hline
& $N=5$ & $N=6$ & \text{grand canonical}\\\hline
$a_{1}$ & $1$ & $1$ & $1$\\
$a_{2}$ & $-0.14142$ & $-0.14731$ & $-0.17677$\\
$a_{3}$ & $0.00842$ & $0.00685$ & $-0.0033$\\
$a_{4}$ & $0.00190$ & $0.00199$ & $-0.00011$\\\hline
\end{tabular}
\end{table}

\begin{table}[ptb]
\caption{The virial coefficients of ideal Gentile gases: $q=6$}%
\label{table5}
\centering
\begin{tabular}
[c]{ccc}\hline
& $N=6$ & \text{grand canonical}\\\hline
$a_{1}$ & $1$ & $1$\\
$a_{2}$ & $-0.14731$ & $-0.17677$\\
$a_{3}$ & $0.00685$ & $-0.0033$\\
$a_{4}$ & $0.0020$ & $-0.00011$\\\hline
\end{tabular}
\end{table}

From Tables (\ref{table1})-(\ref{table5}), one can see that the virial
coefficients calculated from in the canonical ensemble are different from
those calculated in the grand canonical ensemble. However, as $N$ increases,
the result obtained in the canonical ensemble tends to the result obtained in
the grand canonical ensemble.

\section{The virial coefficients of interacting classical and quantum gases in
the canonical ensemble \label{virialint}}

In this section, based on the expansion of the Bell polynomial, we calculate
the second virial coefficient of interacting classical and quantum gases in
canonical ensemble and then compare the results with those calculated in the
grand canonical ensemble.

The Bell polynomial can be expended as%
\begin{equation}
B_{N}\left[  \Gamma_{1},\Gamma_{2},...,\Gamma_{N}\right]  =\Gamma_{1}%
^{N}+\frac{N!}{2\left(  N-2\right)  !}\Gamma_{1}^{N-2}\Gamma_{2}%
+...+\Gamma_{N}.
\end{equation}
Then the canonical partition function (\ref{989898}) becomes%
\begin{equation}
Z\left(  \beta,N\right)  =\frac{1}{N!}\Gamma_{1}^{N}+\frac{1}{2\left(
N-2\right)  !}\Gamma_{1}^{N-2}\Gamma_{2}+...+\frac{1}{N!}\Gamma_{N}.
\label{zk}%
\end{equation}

The second virial coefficients of interacting classical and quantum gases can
be directly obtained from the canonical partition function. By Eq.
(\ref{Gammal}), we have $\Gamma_{1}=\frac{V}{\lambda^{3}}$ and $\Gamma
_{2}=\frac{2V}{\lambda^{3}}b_{2}$. Then Eq. (\ref{zk}) becomes%
\begin{equation}
Z\left(  \beta,N\right)  =\frac{1}{N!}\left(  \frac{V}{\lambda^{3}}\right)
^{N}+\frac{1}{\left(  N-2\right)  !}\left(  \frac{V}{\lambda^{3}}\right)
^{N-1}b_{2}+\cdots. \label{zks}%
\end{equation}
Substituting Eq. (\ref{zks}) into Eq. (\ref{xtfc}) and selecting the
coefficients of $\left(  N\lambda^{3}/V\right)  ^{2}$ give the second virial
coefficients:%
\begin{equation}
a_{2}=b_{2}\left(  1-\frac{1}{N}\right)  . \label{sdsd}%
\end{equation}
One can see that the second virial coefficients are functions of the total
particle number $N$.

Comparing the second virial coefficient obtained in the canonical ensemble,
Eq. (\ref{sdsd}), with the second virial coefficient obtained in the grand
canonical ensemble \cite{Pathria1996Statistical},%
\begin{equation}
a_{2}^{grand}=b_{2},
\end{equation}
we can see that the result obtained in the canonical partition function
depends on the total particle number $N$ and recovers the result obtained in
the grand canonical ensemble when $N$ tends to infinite.

\section{Conclusions and discussions \label{Conclusionoutlook}}

Solving an $N$-body system is an important problem in physics. There are many
mechanical and statistical-mechanical methods developed to this problem.

The mechanical treatment for an $N$-body system is to find the eigenvalues and
eigenstates of the $N$-body Hamiltonian of the system. In the mechanical
treatment, both informations of eigenvalues and eigenfunctions are taken into
account. However, as is well known, solving an $N$-body system in mechanics is
very difficult. In an interacting quantum system, two factors, classical
inter-particle interactions and the quantum exchange interactions, intertwined together.

The statistical-mechanical treatment for an $N$-body system, essentially, is
to use average values instead of exact values. Only the information of
eigenvalues is taken into account. All thermodynamic quantities can be
obtained from, for example, the canonical partition function $Z\left(
\beta,N\right)  =\sum_{s}e^{-\beta E_{s}}$ which is determined only by the
eigenvalues $E_{s}$ of the $N$-body system. At the expense of losing the
information of eigenfunctions, the statistical-mechanical treatment is much
easier compared with the mechanical treatment. In a word, the
statistical-mechanical treatment can be regarded as an approximate method for
$N$-body problems.

A rigorous description of an $N$-body system in statistical mechanics should
be made in a canonical ensemble with fixed particle number $N$.\ The core task
in a canonical ensemble is to seek the canonical partition function $Z\left(
\beta,N\right)  $. The calculation of the canonical partition function is in
fact a sum over all states with a constraint of the total particle number of
the system. Nevertheless, it is very difficult to calculate the canonical
partition function when one has to deal with the difficulties from the
classical inter-particle interactions and quantum exchange interactions and
take the constraint of the total particle number of the system into
consideration simultaneously.

In order to avoid the difficulty caused by the constraint of the total
particle number, conventionally, one introduces the grand canonical ensemble
which\ has no constraint on the particle number, and, instead of canonical
partition function, one turns to calculate the grand canonical partition
function. The advantage is that the constraint of particle numbers on the sum
is now removed; the price is that the particle number now is not a constant.
In the grand canonical ensemble, the total particle number is not rigorously
equal to an exact number $N$, but is equal to the average particle number
$\left\langle N\right\rangle $. That is to say, an $N$-body problem is
approximately converted to an $\left\langle N\right\rangle $-body problem,
with a deviation of $\frac{1}{\sqrt{N}}$. Unless in the thermodynamic limit,
i.e., $N\rightarrow\infty$, the result obtained in the grand canonical
ensemble is not the same as that obtained in the canonical ensemble. Thus, for
an $N$-body problem, the grand canonical ensemble is an approximate method
even in statistical-mechanical treatment.

The method suggested in the paper is based on the mathematical theory of the
symmetric function and the Bell polynomial. In this paper, for ideal quantum
gases, including Bose, Fermi, and Gentile gases, we suggest a method to
calculate exact canonical partition functions, and show that the canonical
partition functions of ideal Bose, Fermi, and Gentile gases can be represented
as linear combinations of the $S$-function. For interacting classical and
quantum gases, we point out that the canonical partition functions given by
the cluster expansion method are indeed the Bell polynomial. Starting from the
exact canonical partition functions of ideal quantum gases, we calculate the
virial coefficients in canonical ensemble. For interacting gases, we calculate
the virial coefficients in canonical ensembles instead of in grand canonical
ensembles as that in literature.
\appendix
%\section{Some title}
%Please always give a title also for appendices.

\section{Appendix: Details of the calculation for ideal Gentile gases in the
canonical ensemble \label{appendix}}

In this appendix, we give some details for the calculation of the ideal
Gentile gases in the canonical ensemble.

\subsection{The coefficient $Q^{I}\left(  q\right)  $ \label{QIq}}

The canonical partition function of ideal Gentile gases, Eq. (\ref{3333}), can
be represented as a linear combination of the $S$-function $\left(
\lambda\right)  _{I}\left(  e^{-\beta\varepsilon_{1}},e^{-\beta\varepsilon
_{2}},...\right)  $ and the corresponding coefficient $Q^{I}\left(  q\right)
$ is defined by Eq. (\ref{Q}). That is to say, the canonical partition
function of ideal Gentile gases will be obtained once the coefficient
$Q^{I}\left(  q\right)  $ is known. In this Appendix, we show how to calculate
the coefficient $Q^{I}\left(  q\right)  $ for some given $N$.

$N=3$\textit{.} The Kostka number $k_{K}^{J}$ for $N=3$ reads $k_{\left(
3\right)  }^{\left(  3\right)  }=k_{1}^{1}=1$, $k_{\left(  2,1\right)
}^{\left(  3\right)  }=k_{2}^{1}=0$, $k_{\left(  1^{3}\right)  }^{\left(
3\right)  }=k_{3}^{1}=0$, $k_{\left(  3\right)  }^{\left(  2,1\right)  }%
=k_{1}^{2}=1$, $k_{\left(  2,1\right)  }^{\left(  2,1\right)  }=k_{2}^{2}=1$,
$k_{\left(  1^{3}\right)  }^{\left(  2,1\right)  }=k_{3}^{2}=0$, $k_{\left(
3\right)  }^{\left(  1^{3}\right)  }=k_{1}^{3}=1$, $k_{\left(  2,1\right)
}^{\left(  1^{3}\right)  }=k_{2}^{3}=2$, and $k_{\left(  1^{3}\right)
}^{\left(  1^{3}\right)  }=k_{3}^{3}=1$
\cite{Littlewood2005The,Macdonald1998Symmetric}. For clarity , we can rewrite
the Kostka number in a matrix form:%
\begin{equation}
k=\left(
\begin{array}
[c]{ccc}%
1 & 0 & 0\\
1 & 1 & 0\\
1 & 2 & 1
\end{array}
\right)  , \label{a1}%
\end{equation}
where we take the upper index of $k_{K}^{J}$ as a row index and the lower
index as a column index.

For $q=3$, by the definition of $\Gamma^{K}\left(  q\right)  $, Eq.
(\ref{GAMMA}), we have $\Gamma_{\left(  3\right)  }\left(  3\right)
=\Gamma_{1}\left(  3\right)  =1$, $\Gamma_{\left(  2,1\right)  }\left(
3\right)  =\Gamma_{2}\left(  3\right)  =1$, and $\Gamma_{\left(  1^{3}\right)
}\left(  3\right)  =\Gamma_{3}\left(  3\right)  =1$. For clarity, we express
$\Gamma^{K}\left(  q\right)  $ as%
\begin{equation}
\Gamma\left(  3\right)  =\left(
\begin{array}
[c]{c}%
1\\
1\\
1
\end{array}
\right)  , \label{a2}%
\end{equation}
where we take the upper index of $\Gamma^{K}\left(  q\right)  $ as a row
index. Again we express Eq. (\ref{Q}) in the matrix form
\begin{equation}
Q\left(  q\right)  =k^{-1}\Gamma\left(  q\right)  , \label{Qmatrix}%
\end{equation}
where $Q\left(  q\right)  $ denotes the vector form of $Q^{I}\left(  q\right)
$ by taking the upper index as a row index, and $k^{-1}$ is the matrix form of
$\left(  k^{-1}\right)  _{K}^{J}$. Then substituting Eqs. (\ref{a1}) and
(\ref{a2}) into Eq. (\ref{Qmatrix}) gives%
\begin{equation}
Q\left(  3\right)  =\left(
\begin{array}
[c]{ccc}%
1 & 0 & 0\\
1 & 1 & 0\\
1 & 2 & 1
\end{array}
\right)  ^{-1}\left(
\begin{array}
[c]{c}%
1\\
1\\
1
\end{array}
\right)  =\left(
\begin{array}
[c]{c}%
1\\
0\\
0
\end{array}
\right)  .
\end{equation}
That is to say $Q^{\left(  3\right)  }\left(  3\right)  =Q^{1}\left(
3\right)  =1$, $Q^{\left(  2,1\right)  }\left(  3\right)  =Q^{2}\left(
3\right)  =0$, $Q^{\left(  1^{3}\right)  }\left(  3\right)  =Q^{3}\left(
3\right)  =0$.

Therefore, when $N=3$ and $q=3$, Eq. (\ref{3333}) can be expressed as
\begin{align}
Z_{3}\left(  \beta,3\right)   &  =1\times\left(  3\right)  \left(
e^{-\beta\varepsilon_{1}},e^{-\beta\varepsilon_{2}},...\right)  +0\times
\left(  2,1\right)  \left(  e^{-\beta\varepsilon_{1}},e^{-\beta\varepsilon
_{2}},...\right)  +0\times\left(  1^{3}\right)  \left(  e^{-\beta
\varepsilon_{1}},e^{-\beta\varepsilon_{2}},...\right) \nonumber\\
&  =\left(  3\right)  \left(  e^{-\beta\varepsilon_{1}},e^{-\beta
\varepsilon_{2}},...\right)  , \label{PARTITION2}%
\end{align}
where $\left(  \lambda\right)  _{1}\left(  e^{-\beta\varepsilon_{1}}%
,e^{-\beta\varepsilon_{2}},...\right)  =\left(  3\right)  \left(
e^{-\beta\varepsilon_{1}},e^{-\beta\varepsilon_{2}},...\right)  $, $\left(
\lambda\right)  _{2}\left(  e^{-\beta\varepsilon_{1}},e^{-\beta\varepsilon
_{2}},...\right)  =\left(  2,1\right)  \left(  e^{-\beta\varepsilon_{1}%
},e^{-\beta\varepsilon_{2}},...\right)  $, and $\left(  \lambda\right)
_{3}\left(  e^{-\beta\varepsilon_{1}},e^{-\beta\varepsilon_{2}},...\right)
=\left(  1^{3}\right)  \left(  e^{-\beta\varepsilon_{1}},e^{-\beta
\varepsilon_{2}},...\right)  $. Eq. (\ref{PARTITION2}) is just the canonical
partition function for Gentile gases with the particle number $N=3$ and\ the
maximum occupation number $q=3$.

For $q=2$, by Eq. (\ref{GAMMA}), we have%
\begin{equation}
\Gamma\left(  2\right)  =\left(
\begin{array}
[c]{c}%
0\\
1\\
1
\end{array}
\right)  . \label{a3}%
\end{equation}
Substituting Eqs. (\ref{a3}) and (\ref{a1}) into Eq. (\ref{Qmatrix}) gives%
\begin{equation}
Q\left(  2\right)  =\left(
\begin{array}
[c]{ccc}%
1 & 0 & 0\\
1 & 1 & 0\\
1 & 2 & 1
\end{array}
\right)  ^{-1}\left(
\begin{array}
[c]{c}%
0\\
1\\
1
\end{array}
\right)  =\left(
\begin{array}
[c]{c}%
0\\
1\\
-1
\end{array}
\right)  . \label{Q32}%
\end{equation}
Substituting the coefficient $Q\left(  2\right)  $, Eq. (\ref{Q32}), into Eq.
(\ref{3333}) gives the canonical partition function\textbf{ }%
\begin{align}
Z_{2}\left(  \beta,3\right)   &  =0\times\left(  3\right)  \left(
e^{-\beta\varepsilon_{1}},e^{-\beta\varepsilon_{2}},...\right)  +1\times
\left(  2,1\right)  \left(  e^{-\beta\varepsilon_{1}},e^{-\beta\varepsilon
_{2}},...\right)  -1\times\left(  1^{3}\right)  \left(  e^{-\beta
\varepsilon_{1}},e^{-\beta\varepsilon_{2}},...\right) \nonumber\\
&  =\left(  2,1\right)  \left(  e^{-\beta\varepsilon_{1}},e^{-\beta
\varepsilon_{2}},...\right)  -\left(  1^{3}\right)  \left(  e^{-\beta
\varepsilon_{1}},e^{-\beta\varepsilon_{2}},...\right)  . \label{P32}%
\end{align}

For $q=1$, by Eq. (\ref{GAMMA}), we have%
\begin{equation}
\Gamma\left(  1\right)  =\left(
\begin{array}
[c]{c}%
0\\
0\\
1
\end{array}
\right)  . \label{a4}%
\end{equation}
Substituting Eqs. (\ref{a4}) and (\ref{a1}) into Eq. (\ref{Qmatrix}) gives%
\begin{equation}
Q\left(  1\right)  =\left(
\begin{array}
[c]{ccc}%
1 & 0 & 0\\
1 & 1 & 0\\
1 & 2 & 1
\end{array}
\right)  ^{-1}\left(
\begin{array}
[c]{c}%
0\\
0\\
1
\end{array}
\right)  =\left(
\begin{array}
[c]{c}%
0\\
0\\
1
\end{array}
\right)  . \label{Q31}%
\end{equation}
Substituting the coefficient $Q\left(  1\right)  $ given by Eq. (\ref{Q31})
into Eq. (\ref{3333}) gives the canonical partition function\textbf{ }%
\begin{align}
Z_{1}\left(  \beta,3\right)   &  =0\times\left(  3\right)  \left(
e^{-\beta\varepsilon_{1}},e^{-\beta\varepsilon_{2}},...\right)  +0\times
\left(  2,1\right)  \left(  e^{-\beta\varepsilon_{1}},e^{-\beta\varepsilon
_{2}},...\right)  +1\times\left(  1^{3}\right)  \left(  e^{-\beta
\varepsilon_{1}},e^{-\beta\varepsilon_{2}},...\right) \nonumber\\
&  =\left(  1^{3}\right)  \left(  e^{-\beta\varepsilon_{1}},e^{-\beta
\varepsilon_{2}},...\right)  . \label{P31}%
\end{align}

$N=4$\textit{.} The Kostka number for $N=4$ is
\cite{Littlewood2005The,Macdonald1998Symmetric}%
\begin{equation}
k=\left(
\begin{array}
[c]{ccccc}%
1 & 0 & 0 & 0 & 0\\
1 & 1 & 0 & 0 & 0\\
1 & 1 & 1 & 0 & 0\\
1 & 2 & 1 & 1 & 0\\
1 & 3 & 2 & 3 & 1
\end{array}
\right)  . \label{a5}%
\end{equation}

For $q=4$, by Eq. (\ref{GAMMA}), we have
\begin{equation}
\Gamma\left(  4\right)  =\left(
\begin{array}
[c]{c}%
1\\
1\\
1\\
1\\
1
\end{array}
\right)  . \label{a6}%
\end{equation}
Substituting Eqs. (\ref{a5}) and (\ref{a6}) into Eq. (\ref{Qmatrix}) gives
\begin{equation}
Q\left(  4\right)  =\left(
\begin{array}
[c]{ccccc}%
1 & 0 & 0 & 0 & 0\\
1 & 1 & 0 & 0 & 0\\
1 & 1 & 1 & 0 & 0\\
1 & 2 & 1 & 1 & 0\\
1 & 3 & 2 & 3 & 1
\end{array}
\right)  ^{-1}\left(
\begin{array}
[c]{c}%
1\\
1\\
1\\
1\\
1
\end{array}
\right)  =\left(
\begin{array}
[c]{c}%
1\\
0\\
0\\
0\\
0
\end{array}
\right)  . \label{Q44}%
\end{equation}
Substituting the coefficient $Q\left(  4\right)  $ given by Eq. (\ref{Q44})
into Eq. (\ref{3333}) gives the canonical partition function\textbf{ }%
\begin{align}
Z_{4}\left(  4,\beta\right)   &  =1\times\left(  4\right)  \left(
e^{-\beta\varepsilon_{1}},e^{-\beta\varepsilon_{2}},...\right)  +0\times
\left(  3,1\right)  \left(  e^{-\beta\varepsilon_{1}},e^{-\beta\varepsilon
_{2}},...\right)  +0\times\left(  2^{2}\right)  \left(  e^{-\beta
\varepsilon_{1}},e^{-\beta\varepsilon_{2}},...\right) \nonumber\\
&  +0\times\left(  2,1^{2}\right)  \left(  e^{-\beta\varepsilon_{1}}%
,e^{-\beta\varepsilon_{2}},...\right)  +0\times\left(  1^{5}\right)  \left(
e^{-\beta\varepsilon_{1}},e^{-\beta\varepsilon_{2}},...\right) \nonumber\\
&  =\left(  4\right)  \left(  e^{-\beta\varepsilon_{1}},e^{-\beta
\varepsilon_{2}},...\right)  . \label{P44}%
\end{align}

For $q=3$, by Eq. (\ref{GAMMA}), we have
\begin{equation}
\Gamma\left(  3\right)  =\left(
\begin{array}
[c]{c}%
0\\
1\\
1\\
1\\
1
\end{array}
\right)  . \label{a7}%
\end{equation}
Substituting Eqs. (\ref{a5}) and (\ref{a7}) into Eq. (\ref{Qmatrix}) gives
\begin{equation}
Q\left(  3\right)  =\left(
\begin{array}
[c]{ccccc}%
1 & 0 & 0 & 0 & 0\\
1 & 1 & 0 & 0 & 0\\
1 & 1 & 1 & 0 & 0\\
1 & 2 & 1 & 1 & 0\\
1 & 3 & 2 & 3 & 1
\end{array}
\right)  ^{-1}\left(
\begin{array}
[c]{c}%
0\\
1\\
1\\
1\\
1
\end{array}
\right)  =\left(
\begin{array}
[c]{c}%
0\\
1\\
0\\
-1\\
1
\end{array}
\right)  . \label{Q43}%
\end{equation}
Substituting the coefficient $Q\left(  3\right)  $ given by (\ref{Q43}) into
Eq. (\ref{3333}) gives the canonical partition function\textbf{ }%
\begin{align}
Z_{3}\left(  \beta,4\right)   &  =0\times\left(  4\right)  \left(
e^{-\beta\varepsilon_{1}},e^{-\beta\varepsilon_{2}},...\right)  +1\times
\left(  3,1\right)  \left(  e^{-\beta\varepsilon_{1}},e^{-\beta\varepsilon
_{2}},...\right)  +0\times\left(  2^{2}\right)  \left(  e^{-\beta
\varepsilon_{1}},e^{-\beta\varepsilon_{2}},...\right) \nonumber\\
&  -1\times\left(  2,1^{2}\right)  \left(  e^{-\beta\varepsilon_{1}}%
,e^{-\beta\varepsilon_{2}},...\right)  +1\times\left(  1^{5}\right)  \left(
e^{-\beta\varepsilon_{1}},e^{-\beta\varepsilon_{2}},...\right) \nonumber\\
&  =\left(  3,1\right)  \left(  e^{-\beta\varepsilon_{1}},e^{-\beta
\varepsilon_{2}},...\right)  -\left(  2,1^{2}\right)  \left(  e^{-\beta
\varepsilon_{1}},e^{-\beta\varepsilon_{2}},...\right)  +\left(  1^{5}\right)
\left(  e^{-\beta\varepsilon_{1}},e^{-\beta\varepsilon_{2}},...\right)  .
\label{P43}%
\end{align}

For $q=2$, by Eq. (\ref{GAMMA}), we have%
\begin{equation}
\Gamma\left(  2\right)  =\left(
\begin{array}
[c]{c}%
0\\
0\\
1\\
1\\
1
\end{array}
\right)  . \label{a8}%
\end{equation}
Substituting Eqs. (\ref{a5}) and (\ref{a8}) into Eq. (\ref{Qmatrix}) gives%
\begin{equation}
Q\left(  2\right)  =\left(
\begin{array}
[c]{ccccc}%
1 & 0 & 0 & 0 & 0\\
1 & 1 & 0 & 0 & 0\\
1 & 1 & 1 & 0 & 0\\
1 & 2 & 1 & 1 & 0\\
1 & 3 & 2 & 3 & 1
\end{array}
\right)  ^{-1}\left(
\begin{array}
[c]{c}%
0\\
0\\
1\\
1\\
1
\end{array}
\right)  =\left(
\begin{array}
[c]{c}%
0\\
0\\
1\\
0\\
-1
\end{array}
\right)  . \label{Q42}%
\end{equation}
Substituting the coefficient $Q\left(  2\right)  $ given by Eq. (\ref{Q42})
into Eq. (\ref{3333}) gives the canonical partition function\textbf{ }%
\begin{align}
Z_{2}\left(  \beta,4\right)   &  =0\times\left(  4\right)  \left(
e^{-\beta\varepsilon_{1}},e^{-\beta\varepsilon_{2}},...\right)  +0\times
\left(  3,1\right)  \left(  e^{-\beta\varepsilon_{1}},e^{-\beta\varepsilon
_{2}},...\right)  +1\times\left(  2^{2}\right)  \left(  e^{-\beta
\varepsilon_{1}},e^{-\beta\varepsilon_{2}},...\right) \nonumber\\
&  +0\times\left(  2,1^{2}\right)  \left(  e^{-\beta\varepsilon_{1}}%
,e^{-\beta\varepsilon_{2}},...\right)  -1\times\left(  1^{5}\right)  \left(
e^{-\beta\varepsilon_{1}},e^{-\beta\varepsilon_{2}},...\right) \nonumber\\
&  =\left(  2^{2}\right)  \left(  e^{-\beta\varepsilon_{1}},e^{-\beta
\varepsilon_{2}},...\right)  -\left(  1^{5}\right)  \left(  e^{-\beta
\varepsilon_{1}},e^{-\beta\varepsilon_{2}},...\right)  . \label{P42}%
\end{align}

For $q=1$, by Eq. (\ref{GAMMA}), we have%
\begin{equation}
\Gamma\left(  1\right)  =\left(
\begin{array}
[c]{c}%
0\\
0\\
0\\
0\\
1
\end{array}
\right)  . \label{a9}%
\end{equation}
Substituting Eqs. (\ref{a5}) and (\ref{a9}) into Eq. (\ref{Qmatrix}) gives%
\begin{equation}
Q\left(  1\right)  =\left(
\begin{array}
[c]{ccccc}%
1 & 0 & 0 & 0 & 0\\
1 & 1 & 0 & 0 & 0\\
1 & 1 & 1 & 0 & 0\\
1 & 2 & 1 & 1 & 0\\
1 & 3 & 2 & 3 & 1
\end{array}
\right)  ^{-1}\left(
\begin{array}
[c]{c}%
0\\
0\\
0\\
0\\
1
\end{array}
\right)  =\left(
\begin{array}
[c]{c}%
0\\
0\\
0\\
0\\
1
\end{array}
\right)  . \label{Q41}%
\end{equation}
Substituting the coefficient $Q\left(  2\right)  $ given by Eq. (\ref{Q42})
into Eq. (\ref{3333}) gives the canonical partition function\textbf{ }%
\begin{align}
Z_{1}\left(  \beta,4\right)   &  =0\times\left(  4\right)  \left(
e^{-\beta\varepsilon_{1}},e^{-\beta\varepsilon_{2}},...\right)  +0\times
\left(  3,1\right)  \left(  e^{-\beta\varepsilon_{1}},e^{-\beta\varepsilon
_{2}},...\right)  +0\times\left(  2^{2}\right)  \left(  e^{-\beta
\varepsilon_{1}},e^{-\beta\varepsilon_{2}},...\right) \nonumber\\
&  +0\times\left(  2,1^{2}\right)  \left(  e^{-\beta\varepsilon_{1}}%
,e^{-\beta\varepsilon_{2}},...\right)  +1\times\left(  1^{5}\right)  \left(
e^{-\beta\varepsilon_{1}},e^{-\beta\varepsilon_{2}},...\right) \nonumber\\
&  =\left(  1^{5}\right)  \left(  e^{-\beta\varepsilon_{1}},e^{-\beta
\varepsilon_{2}},...\right)  . \label{P41}%
\end{align}

$N=5$. The Kostka number for $N=5$ is
\cite{Littlewood2005The,Macdonald1998Symmetric}%
\begin{equation}
k=\left(
\begin{array}
[c]{ccccccc}%
1 & 0 & 0 & 0 & 0 & 0 & 0\\
1 & 1 & 0 & 0 & 0 & 0 & 0\\
1 & 1 & 1 & 0 & 0 & 0 & 0\\
1 & 2 & 1 & 1 & 0 & 0 & 0\\
1 & 2 & 2 & 1 & 1 & 0 & 0\\
1 & 3 & 3 & 3 & 2 & 1 & 0\\
1 & 4 & 5 & 6 & 5 & 4 & 1
\end{array}
\right)  . \label{a10}%
\end{equation}

For $q=5$, by Eq. (\ref{GAMMA}), we have%
\begin{equation}
\Gamma\left(  5\right)  =\left(
\begin{array}
[c]{c}%
1\\
1\\
1\\
1\\
1\\
1\\
1
\end{array}
\right)  . \label{a11}%
\end{equation}
Substituting Eqs. (\ref{a10}) and (\ref{a11}) into Eq. (\ref{Qmatrix}) gives%
\begin{equation}
Q\left(  5\right)  =\left(
\begin{array}
[c]{ccccccc}%
1 & 0 & 0 & 0 & 0 & 0 & 0\\
1 & 1 & 0 & 0 & 0 & 0 & 0\\
1 & 1 & 1 & 0 & 0 & 0 & 0\\
1 & 2 & 1 & 1 & 0 & 0 & 0\\
1 & 2 & 2 & 1 & 1 & 0 & 0\\
1 & 3 & 3 & 3 & 2 & 1 & 0\\
1 & 4 & 5 & 6 & 5 & 4 & 1
\end{array}
\right)  ^{-1}\left(
\begin{array}
[c]{c}%
1\\
1\\
1\\
1\\
1\\
1\\
1
\end{array}
\right)  =\left(
\begin{array}
[c]{c}%
1\\
0\\
0\\
0\\
0\\
0\\
0
\end{array}
\right)  . \label{Q55}%
\end{equation}
Substituting the coefficients $Q\left(  5\right)  $ given by Eq. (\ref{Q55})
into Eq. (\ref{3333}) gives the canonical partition function\textbf{ }%
\begin{align}
Z_{5}\left(  \beta,5\right)   &  =1\times\left(  5\right)  \left(
e^{-\beta\varepsilon_{1}},e^{-\beta\varepsilon_{2}},...\right)  +0\times
\left(  4,1\right)  \left(  e^{-\beta\varepsilon_{1}},e^{-\beta\varepsilon
_{2}},...\right)  +0\times\left(  3,2\right)  \left(  e^{-\beta\varepsilon
_{1}},e^{-\beta\varepsilon_{2}},...\right) \nonumber\\
&  +0\times\left(  3,1^{2}\right)  \left(  e^{-\beta\varepsilon_{1}}%
,e^{-\beta\varepsilon_{2}},...\right)  +0\times\left(  2^{2},1\right)  \left(
e^{-\beta\varepsilon_{1}},e^{-\beta\varepsilon_{2}},...\right)  +0\times
\left(  2,1^{3}\right)  \left(  e^{-\beta\varepsilon_{1}},e^{-\beta
\varepsilon_{2}},...\right) \nonumber\\
&  +0\times\left(  1^{5}\right)  \left(  e^{-\beta\varepsilon_{1}}%
,e^{-\beta\varepsilon_{2}},...\right) \nonumber\\
&  =\left(  5\right)  \left(  e^{-\beta\varepsilon_{1}},e^{-\beta
\varepsilon_{2}},...\right)  . \label{P55}%
\end{align}

For $q=4$, by Eq. (\ref{GAMMA}), we have
\begin{equation}
\Gamma\left(  4\right)  =\left(
\begin{array}
[c]{c}%
0\\
1\\
1\\
1\\
1\\
1\\
1
\end{array}
\right)  . \label{a12}%
\end{equation}
Substituting Eqs. (\ref{a10}) and (\ref{a12}) into Eq. (\ref{Qmatrix}) gives%
\begin{equation}
Q\left(  4\right)  =\left(
\begin{array}
[c]{ccccccc}%
1 & 0 & 0 & 0 & 0 & 0 & 0\\
1 & 1 & 0 & 0 & 0 & 0 & 0\\
1 & 1 & 1 & 0 & 0 & 0 & 0\\
1 & 2 & 1 & 1 & 0 & 0 & 0\\
1 & 2 & 2 & 1 & 1 & 0 & 0\\
1 & 3 & 3 & 3 & 2 & 1 & 0\\
1 & 4 & 5 & 6 & 5 & 4 & 1
\end{array}
\right)  ^{-1}\left(
\begin{array}
[c]{c}%
0\\
1\\
1\\
1\\
1\\
1\\
1
\end{array}
\right)  =\left(
\begin{array}
[c]{c}%
0\\
1\\
0\\
-1\\
0\\
1\\
-1
\end{array}
\right)  . \label{Q54}%
\end{equation}
Substituting the coefficient $Q\left(  4\right)  $ given by Eq. (\ref{Q54})
into Eq. (\ref{3333}) gives the canonical partition function\textbf{ }%
\begin{align}
Z_{4}\left(  \beta,5\right)   &  =0\times\left(  5\right)  \left(
e^{-\beta\varepsilon_{1}},e^{-\beta\varepsilon_{2}},...\right)  +1\times
\left(  4,1\right)  \left(  e^{-\beta\varepsilon_{1}},e^{-\beta\varepsilon
_{2}},...\right)  +0\times\left(  3,2\right)  \left(  e^{-\beta\varepsilon
_{1}},e^{-\beta\varepsilon_{2}},...\right) \nonumber\\
&  -1\times\left(  3,1^{2}\right)  \left(  e^{-\beta\varepsilon_{1}}%
,e^{-\beta\varepsilon_{2}},...\right)  +0\times\left(  2^{2},1\right)  \left(
e^{-\beta\varepsilon_{1}},e^{-\beta\varepsilon_{2}},...\right)  +1\times
\left(  2,1^{3}\right)  \left(  e^{-\beta\varepsilon_{1}},e^{-\beta
\varepsilon_{2}},...\right) \nonumber\\
&  -1\times\left(  1^{5}\right)  \left(  e^{-\beta\varepsilon_{1}}%
,e^{-\beta\varepsilon_{2}},...\right) \nonumber\\
&  =\left(  4,1\right)  \left(  e^{-\beta\varepsilon_{1}},e^{-\beta
\varepsilon_{2}},...\right)  -\left(  3,1^{2}\right)  \left(  e^{-\beta
\varepsilon_{1}},e^{-\beta\varepsilon_{2}},...\right) \nonumber\\
&  +\left(  2,1^{3}\right)  \left(  e^{-\beta\varepsilon_{1}},e^{-\beta
\varepsilon_{2}},...\right)  -\left(  1^{5}\right)  \left(  e^{-\beta
\varepsilon_{1}},e^{-\beta\varepsilon_{2}},...\right)  . \label{P54}%
\end{align}

For $q=3$, by Eq. (\ref{GAMMA}), we have%
\begin{equation}
\Gamma\left(  3\right)  =\left(
\begin{array}
[c]{c}%
0\\
0\\
1\\
1\\
1\\
1\\
1
\end{array}
\right)  . \label{a13}%
\end{equation}
Substituting Eqs. (\ref{a10}) and (\ref{a13}) into Eq. (\ref{Qmatrix}) gives%
\begin{equation}
Q\left(  3\right)  =\left(
\begin{array}
[c]{ccccccc}%
1 & 0 & 0 & 0 & 0 & 0 & 0\\
1 & 1 & 0 & 0 & 0 & 0 & 0\\
1 & 1 & 1 & 0 & 0 & 0 & 0\\
1 & 2 & 1 & 1 & 0 & 0 & 0\\
1 & 2 & 2 & 1 & 1 & 0 & 0\\
1 & 3 & 3 & 3 & 2 & 1 & 0\\
1 & 4 & 5 & 6 & 5 & 4 & 1
\end{array}
\right)  ^{-1}\left(
\begin{array}
[c]{c}%
0\\
0\\
1\\
1\\
1\\
1\\
1
\end{array}
\right)  =\left(
\begin{array}
[c]{c}%
0\\
0\\
1\\
0\\
-1\\
0\\
1
\end{array}
\right)  . \label{Q53}%
\end{equation}
Substituting the coefficients $Q\left(  3\right)  $ given by Eq. (\ref{Q53})
into Eq. (\ref{3333}) gives the canonical partition function\textbf{ }%
\begin{align}
Z_{3}\left(  \beta,5\right)   &  =0\times\left(  5\right)  \left(
e^{-\beta\varepsilon_{1}},e^{-\beta\varepsilon_{2}},...\right)  +0\times
\left(  4,1\right)  \left(  e^{-\beta\varepsilon_{1}},e^{-\beta\varepsilon
_{2}},...\right)  +1\times\left(  3,2\right)  \left(  e^{-\beta\varepsilon
_{1}},e^{-\beta\varepsilon_{2}},...\right) \nonumber\\
&  +0\times\left(  3,1^{2}\right)  \left(  e^{-\beta\varepsilon_{1}}%
,e^{-\beta\varepsilon_{2}},...\right)  -1\times\left(  2^{2},1\right)  \left(
e^{-\beta\varepsilon_{1}},e^{-\beta\varepsilon_{2}},...\right)  +0\times
\left(  2,1^{3}\right)  \left(  e^{-\beta\varepsilon_{1}},e^{-\beta
\varepsilon_{2}},...\right) \nonumber\\
&  +1\times\left(  1^{5}\right)  \left(  e^{-\beta\varepsilon_{1}}%
,e^{-\beta\varepsilon_{2}},...\right) \nonumber\\
&  =\left(  3,2\right)  \left(  e^{-\beta\varepsilon_{1}},e^{-\beta
\varepsilon_{2}},...\right)  -\left(  2^{2},1\right)  \left(  e^{-\beta
\varepsilon_{1}},e^{-\beta\varepsilon_{2}},...\right)  +\left(  2,1^{3}%
\right)  \left(  e^{-\beta\varepsilon_{1}},e^{-\beta\varepsilon_{2}%
},...\right) \nonumber\\
&  +\left(  1^{5}\right)  \left(  e^{-\beta\varepsilon_{1}},e^{-\beta
\varepsilon_{2}},...\right)  . \label{P53}%
\end{align}

For $q=2$, by Eq. (\ref{GAMMA}), we have
\begin{equation}
\Gamma\left(  2\right)  =\left(
\begin{array}
[c]{c}%
0\\
0\\
0\\
0\\
1\\
1\\
1
\end{array}
\right)  . \label{a14}%
\end{equation}
Substituting Eqs. (\ref{a10}) and (\ref{a14}) into Eq. (\ref{Qmatrix}) gives%
\begin{equation}
Q\left(  2\right)  =\left(
\begin{array}
[c]{ccccccc}%
1 & 0 & 0 & 0 & 0 & 0 & 0\\
1 & 1 & 0 & 0 & 0 & 0 & 0\\
1 & 1 & 1 & 0 & 0 & 0 & 0\\
1 & 2 & 1 & 1 & 0 & 0 & 0\\
1 & 2 & 2 & 1 & 1 & 0 & 0\\
1 & 3 & 3 & 3 & 2 & 1 & 0\\
1 & 4 & 5 & 6 & 5 & 4 & 1
\end{array}
\right)  ^{-1}\left(
\begin{array}
[c]{c}%
0\\
0\\
0\\
0\\
1\\
1\\
1
\end{array}
\right)  =\left(
\begin{array}
[c]{c}%
0\\
0\\
0\\
0\\
1\\
-1\\
0
\end{array}
\right)  . \label{Q52}%
\end{equation}
Substituting the coefficients $Q\left(  2\right)  $ given by Eq. (\ref{Q52})
into Eq. (\ref{3333}) gives the canonical partition function\textbf{ }%
\begin{align}
Z_{2}\left(  \beta,5\right)   &  =0\times\left(  5\right)  \left(
e^{-\beta\varepsilon_{1}},e^{-\beta\varepsilon_{2}},...\right)  +0\times
\left(  4,1\right)  \left(  e^{-\beta\varepsilon_{1}},e^{-\beta\varepsilon
_{2}},...\right)  +0\times\left(  3,2\right)  \left(  e^{-\beta\varepsilon
_{1}},e^{-\beta\varepsilon_{2}},...\right) \nonumber\\
&  +0\times\left(  3,1^{2}\right)  \left(  e^{-\beta\varepsilon_{1}}%
,e^{-\beta\varepsilon_{2}},...\right)  +1\times\left(  2^{2},1\right)  \left(
e^{-\beta\varepsilon_{1}},e^{-\beta\varepsilon_{2}},...\right)  -1\times
\left(  2,1^{3}\right)  \left(  e^{-\beta\varepsilon_{1}},e^{-\beta
\varepsilon_{2}},...\right) \nonumber\\
&  +0\times\left(  1^{5}\right)  \left(  e^{-\beta\varepsilon_{1}}%
,e^{-\beta\varepsilon_{2}},...\right) \nonumber\\
&  =\left(  2^{2},1\right)  \left(  e^{-\beta\varepsilon_{1}},e^{-\beta
\varepsilon_{2}},...\right)  -1\times\left(  2,1^{3}\right)  \left(
e^{-\beta\varepsilon_{1}},e^{-\beta\varepsilon_{2}},...\right)  . \label{P52}%
\end{align}

For $q=1$, by Eq. (\ref{GAMMA}), we have%
\begin{equation}
\Gamma\left(  1\right)  =\left(
\begin{array}
[c]{c}%
0\\
0\\
0\\
0\\
0\\
0\\
1
\end{array}
\right)  . \label{a15}%
\end{equation}
Substituting Eqs. (\ref{a10}) and (\ref{a15}) into Eq. (\ref{Qmatrix}) gives%

\begin{equation}
Q\left(  1\right)  =\left(
\begin{array}
[c]{ccccccc}%
1 & 0 & 0 & 0 & 0 & 0 & 0\\
1 & 1 & 0 & 0 & 0 & 0 & 0\\
1 & 1 & 1 & 0 & 0 & 0 & 0\\
1 & 2 & 1 & 1 & 0 & 0 & 0\\
1 & 2 & 2 & 1 & 1 & 0 & 0\\
1 & 3 & 3 & 3 & 2 & 1 & 0\\
1 & 4 & 5 & 6 & 5 & 4 & 1
\end{array}
\right)  ^{-1}\left(
\begin{array}
[c]{c}%
0\\
0\\
0\\
0\\
0\\
0\\
1
\end{array}
\right)  =\left(
\begin{array}
[c]{c}%
0\\
0\\
0\\
0\\
0\\
0\\
1
\end{array}
\right)  . \label{Q51}%
\end{equation}
Substituting the coefficients $Q\left(  2\right)  $ given by Eq. (\ref{Q52})
into Eq. (\ref{3333}) gives the canonical partition function\textbf{ }%
\begin{align}
Z_{1}\left(  \beta,5\right)   &  =0\times\left(  5\right)  \left(
e^{-\beta\varepsilon_{1}},e^{-\beta\varepsilon_{2}},...\right)  +0\times
\left(  4,1\right)  \left(  e^{-\beta\varepsilon_{1}},e^{-\beta\varepsilon
_{2}},...\right)  +0\times\left(  3,2\right)  \left(  e^{-\beta\varepsilon
_{1}},e^{-\beta\varepsilon_{2}},...\right) \nonumber\\
&  +0\times\left(  3,1^{2}\right)  \left(  e^{-\beta\varepsilon_{1}}%
,e^{-\beta\varepsilon_{2}},...\right)  +0\times\left(  2^{2},1\right)  \left(
e^{-\beta\varepsilon_{1}},e^{-\beta\varepsilon_{2}},...\right)  +0\times
\left(  2,1^{3}\right)  \left(  e^{-\beta\varepsilon_{1}},e^{-\beta
\varepsilon_{2}},...\right) \nonumber\\
&  +1\times\left(  1^{5}\right)  \left(  e^{-\beta\varepsilon_{1}}%
,e^{-\beta\varepsilon_{2}},...\right) \nonumber\\
&  =\left(  1^{5}\right)  \left(  e^{-\beta\varepsilon_{1}},e^{-\beta
\varepsilon_{2}},...\right)  . \label{P51}%
\end{align}

\subsection{The canonical partition function \label{cpf}}

In Appendix \ref{QIq}, we show how to calculate the coefficients $Q^{I}\left(
q\right)  $ from Eq. (\ref{Q}). The canonical partition function of ideal
Gentile gases can be represented as a linear combination of the $S$-functions
provided the coefficients $Q^{I}\left(  q\right)  $ are given.

In this section, we calculate the canonical partition function and express the
canonical partition function in terms of the single-particle partition
function defined by Eq. (\ref{single}), $Z\left(  \beta\right)  $.

$N=3$\textit{.} The simple characteristics $\chi_{I}^{K}$ of the permutation
group $S_{3}$ for $N=3$ is \cite{Littlewood2005The}
\begin{equation}
\chi=\left(
\begin{array}
[c]{ccc}%
1 & 1 & 1\\
2 & 0 & -1\\
1 & -1 & 1
\end{array}
\right)  , \label{c1}%
\end{equation}
where $\chi_{I}^{K}$ is expressed in the matrix form by taking the upper index
as the row index and the lower index as the column index. Substituting the
simple characteristics $\chi_{I}^{K}$, Eq. \textbf{(\ref{c1}),} into the
definition of the $S$-function, Eq. (\ref{ss}), gives the expression of
$S$-functions, $\left(  \lambda\right)  _{I}\left(  e^{-\beta\varepsilon_{1}%
},e^{-\beta\varepsilon_{2}},...\right)  $, in terms of the single-particle
partition functions $Z\left(  \beta\right)  $. Then substituting the
$S$-function $\left(  \lambda\right)  _{I}\left(  e^{-\beta\varepsilon_{1}%
},e^{-\beta\varepsilon_{2}},...\right)  $ into the canonical partition
function, Eq. \textbf{(}\ref{PARTITION2}\textbf{),} gives the expression of
the canonical partition function in terms of the single-particle partition
function $Z\left(  \beta\right)  $:
\begin{align}
Z_{2}\left(  \beta,3\right)   &  =\left(  2,1\right)  \left(  e^{-\beta
\varepsilon_{1}},e^{-\beta\varepsilon_{2}},...\right)  -\left(  1^{3}\right)
\left(  e^{-\beta\varepsilon_{1}},e^{-\beta\varepsilon_{2}},...\right)
\nonumber\\
&  =\frac{1}{3!}Z\left(  \beta\right)  ^{3}+\frac{1}{2}Z\left(  \beta\right)
Z\left(  2\beta\right)  -\frac{2}{3}Z\left(  3\beta\right)  . \label{c}%
\end{align}

$N=4$\textit{. }The simple characteristic $\chi_{I}^{K}$ of the permutation
group $S_{4}$ for $N=4$ is \cite{Littlewood2005The}%
\begin{equation}
\chi=\left(
\begin{array}
[c]{ccccc}%
1 & 1 & 1 & 1 & 1\\
3 & 0 & -1 & 1 & -1\\
2 & -1 & 2 & 0 & 0\\
3 & 0 & -1 & -1 & 1\\
1 & 1 & 1 & -1 & -1
\end{array}
\right)  . \label{c2}%
\end{equation}
Substituting the simple characteristics \textbf{(\ref{c2})} into the
$S$-function in Eqs. (\ref{P42}) and (\ref{P43}), respectively,\textit{ }gives%
\begin{align}
Z_{2}\left(  \beta,4\right)   &  =\left(  2^{2}\right)  \left(  e^{-\beta
\varepsilon_{1}},e^{-\beta\varepsilon_{2}},...\right)  -\left(  1^{4}\right)
\left(  e^{-\beta\varepsilon_{1}},e^{-\beta\varepsilon_{2}},...\right)
\nonumber\\
&  =\frac{1}{4!}Z\left(  \beta\right)  ^{4}+\frac{1}{4}Z\left(  \beta\right)
^{2}Z\left(  2\beta\right)  +\frac{1}{8}Z\left(  2\beta\right)  ^{2}-\frac
{2}{3}Z\left(  3\beta\right)  Z\left(  \beta\right)  +\frac{1}{4}Z\left(
4\beta\right)  \label{d}%
\end{align}
and%
\begin{align}
Z_{3}\left(  \beta,4\right)   &  =\left(  3,1\right)  \left(  e^{-\beta
\varepsilon_{1}},e^{-\beta\varepsilon_{2}},...\right)  -\left(  2,1^{2}%
\right)  \left(  e^{-\beta\varepsilon_{1}},e^{-\beta\varepsilon_{2}%
},...\right)  +\left(  1^{4}\right)  \left(  e^{-\beta\varepsilon_{1}%
},e^{-\beta\varepsilon_{2}},...\right) \nonumber\\
&  =\frac{1}{4!}Z\left(  \beta\right)  ^{4}+\frac{1}{4}Z\left(  \beta\right)
^{2}Z\left(  2\beta\right)  +\frac{1}{8}Z\left(  2\beta\right)  ^{2}+\frac
{1}{3}Z\left(  3\beta\right)  Z\left(  \beta\right)  -\frac{3}{4}Z\left(
4\beta\right)  . \label{e}%
\end{align}

$N=5$\textit{.} The simple characteristic $\chi_{I}^{K}$ of the permutation
group $S_{5}$ for $N=5$ is \cite{Littlewood2005The}%
\begin{equation}
\chi=\left(
\begin{array}
[c]{ccccccc}%
1 & 1 & 1 & 1 & 1 & 1 & 1\\
-1 & 0 & -1 & 1 & 0 & 1 & 4\\
0 & -1 & 1 & -1 & 1 & -1 & 5\\
1 & 0 & 0 & 0 & -2 & 0 & 6\\
0 & 1 & -1 & -1 & 1 & -1 & 5\\
-1 & 0 & 1 & 1 & 0 & 1 & 4\\
1 & -1 & -1 & 1 & 1 & 1 & 1
\end{array}
\right)  . \label{c3}%
\end{equation}
Substituting the simple characteristics \textbf{(}\ref{c3}\textbf{)} into the
$S$-functions in Eqs. (\ref{P52}), (\ref{P53}), and (\ref{P53}), respectively,
gives%
\begin{align}
Z_{2}\left(  \beta,5\right)   &  =\frac{1}{5!}Z\left(  \beta\right)
^{5}\nonumber\\
&  +\frac{1}{12}Z\left(  \beta\right)  ^{3}Z\left(  2\beta\right)  +\frac
{1}{8}Z\left(  \beta\right)  Z\left(  2\beta\right)  ^{2}\\
&  -\frac{1}{3}Z\left(  \beta\right)  ^{2}Z\left(  3\beta\right)  -\frac{1}%
{3}Z\left(  2\beta\right)  Z\left(  3\beta\right)  +\frac{1}{4}Z\left(
\beta\right)  Z\left(  4\beta\right)  +\frac{1}{5}Z\left(  5\beta\right)  ,
\label{f}%
\end{align}%
\begin{align}
Z_{3}\left(  \beta,5\right)   &  =\frac{1}{5!}Z\left(  \beta\right)
^{5}\nonumber\\
&  +\frac{1}{12}Z\left(  \beta\right)  ^{3}Z\left(  2\beta\right)  +\frac
{1}{8}Z\left(  \beta\right)  Z\left(  2\beta\right)  ^{2}\\
&  +\frac{1}{6}Z\left(  \beta\right)  ^{2}Z\left(  3\beta\right)  +\frac{1}%
{6}Z\left(  2\beta\right)  Z\left(  3\beta\right)  -\frac{3}{4}Z\left(
\beta\right)  Z\left(  4\beta\right)  +\frac{1}{5}Z\left(  5\beta\right)
\label{g}%
\end{align}
and%
\begin{align}
Z_{4}\left(  \beta,5\right)  =\frac{1}{5!}Z\left(  \beta\right)  ^{5}  &
+\frac{1}{12}Z\left(  \beta\right)  ^{3}Z\left(  2\beta\right)  +\frac{1}%
{8}Z\left(  \beta\right)  Z\left(  2\beta\right)  ^{2}\nonumber\\
&  +\frac{1}{6}Z\left(  \beta\right)  ^{2}Z\left(  3\beta\right)  +\frac{1}%
{6}Z\left(  2\beta\right)  Z\left(  3\beta\right)  +\frac{1}{4}Z\left(
\beta\right)  Z\left(  4\beta\right)  -\frac{4}{5}Z\left(  5\beta\right)  .
\label{h}%
\end{align}

$N=6$\textit{. }The simple characteristics $\chi_{I}^{K}$ of the permutation
group\textbf{ }$S_{6}$ for $N=6$ can be found in Ref.
\textbf{\cite{Littlewood2005The}}. A similar procedure gives
\begin{align}
Z_{2}\left(  \beta,6\right)   &  =\frac{1}{6!}Z\left(  \beta\right)
^{6}\nonumber\\
&  +\frac{1}{48}Z\left(  \beta\right)  ^{4}Z\left(  2\beta\right)  +\frac
{1}{16}Z\left(  \beta\right)  ^{2}Z\left(  2\beta\right)  ^{2}\nonumber\\
&  +\frac{1}{48}Z\left(  2\beta\right)  ^{3}-\frac{1}{9}Z\left(  \beta\right)
^{3}Z\left(  3\beta\right)  -\frac{1}{3}Z\left(  \beta\right)  Z\left(
2\beta\right)  Z\left(  3\beta\right) \nonumber\\
&  +\frac{2}{9}Z\left(  3\beta\right)  ^{2}+\frac{1}{8}Z\left(  \beta\right)
^{2}Z\left(  4\beta\right)  +\frac{1}{8}Z\left(  2\beta\right)  Z\left(
4\beta\right)  +\frac{1}{5}Z\left(  5\beta\right)  Z\left(  \beta\right)
-\frac{1}{3}Z\left(  6\beta\right)  , \label{i}%
\end{align}%
\begin{align}
Z_{3}\left(  \beta,6\right)   &  =\frac{1}{6!}Z\left(  \beta\right)
^{6}\nonumber\\
&  +\frac{1}{48}Z\left(  \beta\right)  ^{4}Z\left(  2\beta\right)  +\frac
{1}{16}Z\left(  \beta\right)  ^{2}Z\left(  2\beta\right)  ^{2}\nonumber\\
&  +\frac{1}{48}Z\left(  2\beta\right)  ^{3}+\frac{1}{18}Z\left(
\beta\right)  ^{3}Z\left(  3\beta\right)  +\frac{1}{6}Z\left(  \beta\right)
Z\left(  2\beta\right)  Z\left(  3\beta\right) \nonumber\\
&  +\frac{1}{18}Z\left(  3\beta\right)  ^{2}-\frac{3}{8}Z\left(  \beta\right)
^{2}Z\left(  4\beta\right)  -\frac{3}{8}Z\left(  2\beta\right)  Z\left(
4\beta\right)  +\frac{1}{5}Z\left(  5\beta\right)  Z\left(  \beta\right)
+\frac{1}{6}Z\left(  6\beta\right)  , \label{j}%
\end{align}%
\begin{align}
Z_{4}\left(  \beta,6\right)   &  =\frac{1}{6!}z\left(  \beta\right)
^{6}\nonumber\\
&  +\frac{1}{48}z\left(  \beta\right)  ^{4}z\left(  2\beta\right)  +\frac
{1}{16}z\left(  \beta\right)  ^{2}z\left(  2\beta\right)  ^{2}\nonumber\\
&  +\frac{1}{48}z\left(  2\beta\right)  ^{3}+\frac{1}{18}z\left(
\beta\right)  ^{3}z\left(  3\beta\right)  +\frac{1}{6}z\left(  \beta\right)
z\left(  2\beta\right)  z\left(  3\beta\right) \nonumber\\
&  +\frac{1}{18}z\left(  3\beta\right)  ^{2}+\frac{1}{8}z\left(  \beta\right)
^{2}z\left(  4\beta\right)  +\frac{1}{8}z\left(  2\beta\right)  z\left(
4\beta\right)  -\frac{4}{5}z\left(  5\beta\right)  z\left(  \beta\right)
+\frac{1}{6}z\left(  6\beta\right)  , \label{k}%
\end{align}
and%
\begin{align}
Z_{5}\left(  \beta,6\right)   &  =\frac{1}{6!}Z\left(  \beta\right)
^{6}\nonumber\\
&  +\frac{1}{48}Z\left(  \beta\right)  ^{4}Z\left(  2\beta\right)  +\frac
{1}{16}Z\left(  \beta\right)  ^{2}Z\left(  2\beta\right)  ^{2}\nonumber\\
&  +\frac{1}{48}Z\left(  2\beta\right)  ^{3}+\frac{1}{18}Z\left(
\beta\right)  ^{3}Z\left(  3\beta\right)  +\frac{1}{6}Z\left(  \beta\right)
Z\left(  2\beta\right)  Z\left(  3\beta\right) \nonumber\\
&  +\frac{1}{18}Z\left(  3\beta\right)  ^{2}+\frac{1}{8}Z\left(  \beta\right)
^{2}Z\left(  4\beta\right)  +\frac{1}{8}Z\left(  2\beta\right)  Z\left(
4\beta\right)  +\frac{1}{5}Z\left(  5\beta\right)  Z\left(  \beta\right)
-\frac{5}{6}Z\left(  6\beta\right)  . \label{l}%
\end{align}
The simple characteristics $\chi_{I}^{K}$ of the permutation group $S_{6}$ is
not listed in the paper, but it can be found in Ref. \cite{Littlewood2005The}.
The calculation of the canonical partition function for Gentile gases for
$N=6$ are not listed in this paper, but can be calculated by the method
provided in this paper.

From the results, Eqs. (\ref{c}), (\ref{d}), (\ref{e}), (\ref{f}), (\ref{g}),
(\ref{h}), (\ref{i}), (\ref{j}), (\ref{k}), and (\ref{l}), one can see that
the canonical partition functions $Z_{q}\left(  \beta,N\right)  $ of ideal
Gentile gases can be written in the form of $Z_{q}\left(  \beta,N\right)
=\frac{1}{N!}Z\left(  \beta\right)  ^{N}+$ corrections, where $Z\left(
\beta\right)  ^{N}$ is the canonical partition function of ideal classical
gases with $N$ particles and the correction depends on both $N$ and $q$.

\subsection{The virial coefficients}

In this section, we calculate the virial coefficients from the canonical
partition function.

The exact canonical partition function of ideal Gentile gases can be obtained
by substituting the single-particle partition function $Z\left(  \beta\right)
$, Eq. (\ref{single3}), into the canonical partition functions, Eqs.
(\ref{c}), (\ref{d}), (\ref{e}), (\ref{f}), (\ref{g}), (\ref{h}), (\ref{i}),
(\ref{j}), (\ref{k}), and (\ref{l}).

For $q=2$,%
\begin{equation}
Z_{2}\left(  \beta,3\right)  =\frac{1}{6}\left(  \frac{V}{\lambda^{3}}\right)
^{3}+\frac{1}{4\sqrt{2}}\left(  \frac{V}{\lambda^{3}}\right)  ^{2}-\frac
{2}{9\sqrt{3}}\frac{V}{\lambda^{3}}, \label{55}%
\end{equation}

\begin{equation}
Z_{2}\left(  \beta,4\right)  =\frac{1}{24}\left(  \frac{V}{\lambda^{3}%
}\right)  ^{4}+\frac{1}{8\sqrt{2}}\left(  \frac{V}{\lambda^{3}}\right)
^{3}+\frac{1}{64}\left(  \frac{V}{\lambda^{3}}\right)  ^{2}-\frac{2}{9\sqrt
{3}}\left(  \frac{V}{\lambda^{3}}\right)  ^{2}+\frac{1}{32}\left(  \frac
{V}{\lambda^{3}}\right)  ,
\end{equation}%
\begin{align}
Z_{2}\left(  \beta,5\right)   &  =\frac{1}{120}\left(  \frac{V}{\lambda^{3}%
}\right)  ^{5}+\frac{1}{24\sqrt{2}}\left(  \frac{V}{\lambda^{3}}\right)
^{4}+\frac{1}{64}\left(  \frac{V}{\lambda^{3}}\right)  ^{3}\nonumber\\
&  -\frac{1}{9\sqrt{3}}\left(  \frac{V}{\lambda^{3}}\right)  ^{3}+\frac{1}%
{32}\left(  \frac{V}{\lambda^{3}}\right)  ^{2}-\frac{1}{18\sqrt{6}}\left(
\frac{V}{\lambda^{3}}\right)  ^{2}+\frac{1}{25\sqrt{5}}\frac{V}{\lambda^{3}},
\end{align}%
\begin{align}
Z_{2}\left(  \beta,6\right)   &  =\frac{1}{720}\left(  \frac{V}{\lambda^{3}%
}\right)  ^{6}+\frac{1}{96\sqrt{2}}\left(  \frac{V}{\lambda^{3}}\right)
^{5}+\frac{1}{128}\left(  \frac{V}{\lambda^{3}}\right)  ^{4}-\frac{1}%
{27\sqrt{3}}\left(  \frac{V}{\lambda^{3}}\right)  ^{4}\nonumber\\
&  +\frac{1}{64}\left(  \frac{V}{\lambda^{3}}\right)  ^{3}-\frac{1}%
{768\sqrt{2}}\left(  \frac{V}{\lambda^{3}}\right)  ^{3}-\frac{1}{18\sqrt{6}%
}\left(  \frac{V}{\lambda^{3}}\right)  ^{3}+\frac{2}{243}\left(  \frac
{V}{\lambda^{3}}\right)  ^{2}\nonumber\\
&  +\frac{1}{128\sqrt{2}}\left(  \frac{V}{\lambda^{3}}\right)  ^{2}+\frac
{1}{25\sqrt{5}}\left(  \frac{V}{\lambda^{3}}\right)  ^{2}-\frac{1}{18\sqrt{6}%
}\left(  \frac{V}{\lambda^{3}}\right)  ^{2}. \label{66}%
\end{align}
Substituting Eqs. (\ref{55}-\ref{66}) into the equation of state (\ref{xtfc})
respectively gives the virial coefficients $a_{l}$ listed in Table
(\ref{table1}).

For $q=3$,%
\begin{equation}
Z_{3}\left(  \beta,3\right)  =\frac{1}{6}\left(  \frac{V}{\lambda^{3}}\right)
^{3}+\frac{1}{4\sqrt{2}}\left(  \frac{V}{\lambda^{3}}\right)  ^{2}+\frac
{1}{9\sqrt{3}}\frac{V}{\lambda^{3}}, \label{77}%
\end{equation}

\begin{align}
Z_{3}\left(  \beta,4\right)   &  =\frac{1}{24}\left(  \frac{V}{\lambda^{3}%
}\right)  ^{4}+\frac{1}{8\sqrt{2}}\left(  \frac{V}{\lambda^{3}}\right)
^{3}+\frac{1}{64}\left(  \frac{V}{\lambda^{3}}\right)  ^{2}\nonumber\\
&  +\frac{1}{9\sqrt{3}}\left(  \frac{V}{\lambda^{3}}\right)  ^{2}-\frac{3}%
{32}\left(  \frac{V}{\lambda^{3}}\right)  ,
\end{align}%
\begin{align}
Z_{3}\left(  \beta,5\right)   &  =\frac{1}{120}\left(  \frac{V}{\lambda^{3}%
}\right)  ^{5}+\frac{1}{24\sqrt{2}}\left(  \frac{V}{\lambda^{3}}\right)
^{4}+\frac{1}{64}\left(  \frac{V}{\lambda^{3}}\right)  ^{3}\nonumber\\
&  +\frac{1}{18\sqrt{3}}\left(  \frac{V}{\lambda^{3}}\right)  ^{3}-\frac
{3}{32}\left(  \frac{V}{\lambda^{3}}\right)  ^{2}+\frac{1}{36\sqrt{6}}\left(
\frac{V}{\lambda^{3}}\right)  ^{2}+\frac{1}{25\sqrt{5}}\frac{V}{\lambda^{3}},
\end{align}%
\begin{align}
Z_{3}\left(  \beta,6\right)   &  =\frac{1}{720}\left(  \frac{V}{\lambda^{3}%
}\right)  ^{6}+\frac{1}{96\sqrt{2}}\left(  \frac{V}{\lambda^{3}}\right)
^{5}+\frac{1}{128}\left(  \frac{V}{\lambda^{3}}\right)  ^{4}+\frac{1}%
{54\sqrt{3}}\left(  \frac{V}{\lambda^{3}}\right)  ^{4}\nonumber\\
&  -\frac{3}{64}\left(  \frac{V}{\lambda^{3}}\right)  ^{3}+\frac{1}%
{768\sqrt{2}}\left(  \frac{V}{\lambda^{3}}\right)  ^{3}+\frac{1}{36\sqrt{6}%
}\left(  \frac{V}{\lambda^{3}}\right)  ^{3}+\frac{1}{486}\left(  \frac
{V}{\lambda^{3}}\right)  ^{2}\nonumber\\
&  -\frac{3}{128\sqrt{2}}\left(  \frac{V}{\lambda^{3}}\right)  ^{2}+\frac
{1}{25\sqrt{5}}\left(  \frac{V}{\lambda^{3}}\right)  ^{2}+\frac{1}{36\sqrt{6}%
}\left(  \frac{V}{\lambda^{3}}\right)  ^{2}. \label{88}%
\end{align}
Substituting Eqs. (\ref{77})-(\ref{88}) into the equation of state
(\ref{xtfc}) respectively gives the virial coefficients $a_{l}$ listed in
Table (\ref{table2}).

For $q=4$,
\begin{align}
Z_{4}\left(  \beta,4\right)   &  =\frac{1}{24}\left(  \frac{V}{\lambda^{3}%
}\right)  ^{4}+\frac{1}{8\sqrt{2}}\left(  \frac{V}{\lambda^{3}}\right)
^{3}+\frac{1}{64}\left(  \frac{V}{\lambda^{3}}\right)  ^{2}\nonumber\\
&  +\frac{1}{9\sqrt{3}}\left(  \frac{V}{\lambda^{3}}\right)  ^{2}+\frac{1}%
{32}\left(  \frac{V}{\lambda^{3}}\right)  , \label{99}%
\end{align}%
\begin{align}
Z_{4}\left(  \beta,5\right)   &  =\frac{1}{120}\left(  \frac{V}{\lambda^{3}%
}\right)  ^{5}+\frac{1}{24\sqrt{2}}\left(  \frac{V}{\lambda^{3}}\right)
^{4}+\frac{1}{64}\left(  \frac{V}{\lambda^{3}}\right)  ^{3}\nonumber\\
&  +\frac{1}{18\sqrt{3}}\left(  \frac{V}{\lambda^{3}}\right)  ^{3}+\frac
{1}{32}\left(  \frac{V}{\lambda^{3}}\right)  ^{2}+\frac{1}{36\sqrt{6}}\left(
\frac{V}{\lambda^{3}}\right)  ^{2}-\frac{4}{25\sqrt{5}}\frac{V}{\lambda^{3}},
\end{align}%
\begin{align}
Z_{4}\left(  \beta,6\right)   &  =\frac{1}{720}\left(  \frac{V}{\lambda^{3}%
}\right)  ^{6}+\frac{1}{96\sqrt{2}}\left(  \frac{V}{\lambda^{3}}\right)
^{5}+\frac{1}{128}\left(  \frac{V}{\lambda^{3}}\right)  ^{4}+\frac{1}%
{54\sqrt{3}}\left(  \frac{V}{\lambda^{3}}\right)  ^{4}\nonumber\\
&  +\frac{1}{64}\left(  \frac{V}{\lambda^{3}}\right)  ^{3}+\frac{1}%
{768\sqrt{2}}\left(  \frac{V}{\lambda^{3}}\right)  ^{3}+\frac{1}{36\sqrt{6}%
}\left(  \frac{V}{\lambda^{3}}\right)  ^{3}+\frac{1}{486}\left(  \frac
{V}{\lambda^{3}}\right)  ^{2}\nonumber\\
&  +\frac{1}{128\sqrt{2}}\left(  \frac{V}{\lambda^{3}}\right)  ^{2}-\frac
{4}{25\sqrt{5}}\left(  \frac{V}{\lambda^{3}}\right)  ^{2}+\frac{1}{36\sqrt{6}%
}\left(  \frac{V}{\lambda^{3}}\right)  ^{2}. \label{10}%
\end{align}
Substituting Eqs. (\ref{99}-\ref{10}) into the equation of state (\ref{xtfc})
respectively gives the virial coefficients $a_{l}$ listed in Table
(\ref{table3}).

For $q=5$,%
\begin{align}
Z_{5}\left(  \beta,5\right)   &  =\frac{1}{120}\left(  \frac{V}{\lambda^{3}%
}\right)  ^{5}+\frac{1}{24\sqrt{2}}\left(  \frac{V}{\lambda^{3}}\right)
^{4}+\frac{1}{64}\left(  \frac{V}{\lambda^{3}}\right)  ^{3}\nonumber\\
&  +\frac{1}{18\sqrt{3}}\left(  \frac{V}{\lambda^{3}}\right)  ^{3}+\frac
{1}{32}\left(  \frac{V}{\lambda^{3}}\right)  ^{2}+\frac{1}{36\sqrt{6}}\left(
\frac{V}{\lambda^{3}}\right)  ^{2}+\frac{1}{25\sqrt{5}}\frac{V}{\lambda^{3}},
\label{d1}%
\end{align}%
\begin{align}
Z_{5}\left(  \beta,6\right)   &  =\frac{1}{720}\left(  \frac{V}{\lambda^{3}%
}\right)  ^{6}+\frac{1}{96\sqrt{2}}\left(  \frac{V}{\lambda^{3}}\right)
^{5}+\frac{1}{128}\left(  \frac{V}{\lambda^{3}}\right)  ^{4}+\frac{1}%
{54\sqrt{3}}\left(  \frac{V}{\lambda^{3}}\right)  ^{4}\nonumber\\
&  +\frac{1}{64}\left(  \frac{V}{\lambda^{3}}\right)  ^{3}+\frac{1}%
{768\sqrt{2}}\left(  \frac{V}{\lambda^{3}}\right)  ^{3}+\frac{1}{36\sqrt{6}%
}\left(  \frac{V}{\lambda^{3}}\right)  ^{3}+\frac{1}{486}\left(  \frac
{V}{\lambda^{3}}\right)  ^{2}\nonumber\\
&  +\frac{1}{128\sqrt{2}}\left(  \frac{V}{\lambda^{3}}\right)  ^{2}+\frac
{1}{25\sqrt{5}}\left(  \frac{V}{\lambda^{3}}\right)  ^{2}-\frac{5}{36\sqrt{6}%
}\left(  \frac{V}{\lambda^{3}}\right)  ^{2}. \label{d2}%
\end{align}
Substituting Eqs. (\ref{d1}) and (\ref{d2}) into Eq. (\ref{xtfc}) respectively
gives the virial coefficients $a_{l}$ listed in Table (\ref{table4}).

For $q=6$,%
\begin{align}
Z_{6}\left(  \beta,6\right)   &  =\frac{1}{720}\left(  \frac{V}{\lambda^{3}%
}\right)  ^{6}+\frac{1}{96\sqrt{2}}\left(  \frac{V}{\lambda^{3}}\right)
^{5}+\frac{1}{128}\left(  \frac{V}{\lambda^{3}}\right)  ^{4}+\frac{1}%
{54\sqrt{3}}\left(  \frac{V}{\lambda^{3}}\right)  ^{4}\nonumber\\
&  +\frac{1}{64}\left(  \frac{V}{\lambda^{3}}\right)  ^{3}+\frac{1}%
{768\sqrt{2}}\left(  \frac{V}{\lambda^{3}}\right)  ^{3}+\frac{1}{36\sqrt{6}%
}\left(  \frac{V}{\lambda^{3}}\right)  ^{3}+\frac{1}{486}\left(  \frac
{V}{\lambda^{3}}\right)  ^{2}\nonumber\\
&  +\frac{1}{128\sqrt{2}}\left(  \frac{V}{\lambda^{3}}\right)  ^{2}+\frac
{1}{25\sqrt{5}}\left(  \frac{V}{\lambda^{3}}\right)  ^{2}+\frac{1}{36\sqrt{6}%
}\left(  \frac{V}{\lambda^{3}}\right)  ^{2}. \label{d3}%
\end{align}
Substituting Eq (\ref{d3}) into the equation of state (\ref{xtfc}) gives the
virial coefficients $a_{l}$ listed in Table (\ref{table5}).

\acknowledgments

We are very indebted to Dr G. Zeitrauman for his encouragement. This work is supported in part by NSF of China under Grant
No. 11575125 and No. 11675119.

%\acknowledgments%ÖÂл
%%%%%%%%%%ÕýÎĽáÊø

%\begin{thebibliography}{99}

%\end{thebibliography}\endgroup

%\bibitem{a}
%Author, \emph{Title}, \emph{J. Abbrev.} {\bf vol} (year) pg.

%\bibitem{b}
%Author, \emph{Title},
%arxiv:1234.5678.

%\bibitem{c}
%Author, \emph{Title},
%Publisher (year).

% Please avoid comments such as "For a review'', "For some examples",
% "and references therein" or move them in the text. In general,
% please leave only references in the bibliography and move all
% accessory text in footnotes.

% Also, please have only one work for each \bibitem.

%\end{thebibliography}

\providecommand{\href}[2]{#2}\begingroup\raggedright\endgroup

%\bibliographystyle{JHEP}
%\bibliography{refs}% Produces the bibliography via BibTeX.

\end{document}